\newcommand{\memstateset}{S}
\newcommand{\memstatesetarch}{\memstateset_\mathit{a}}
\newcommand{\memstatesetnarch}{\memstateset_{\overline{\mathit{a}}}}
\newcommand{\memstate}{s}
\newcommand{\memstatearch}{\memstate_\mathit{a}}
\newcommand{\memstatenarch}{\memstate_{\overline{\mathit{a}}}}
\newcommand{\memstatenarchargs}[1]{\memstate_{\overline{\mathit{a}},#1}}
\newcommand{\memstateargs}[1]{\memstate_{#1}}
\newcommand{\meminitstateset}{\memstateset_{I}}
\newcommand{\meminitstate}{\memstate_{I}}
\newcommand{\triggerstatesnoarg}{\memstateset_{b}}
\newcommand{\reachable}[1]{\mathit{reach}(#1)}
\newcommand{\bug}{\mathcal{B}}
\newcommand{\meminitstatenarch}{\memstatenarchargs{I}}
\newcommand{\memlocs}{\mathcal{L}}
\newcommand{\memloc}{l}
\newcommand{\memlocinv}{\memloc_{\mathit{orig}}}
\newcommand{\memlocinvy}{\memloc_{\mathit{dup}}}
\newcommand{\memlocorig}{\memloc_O}
\newcommand{\memlocdup}{\memloc_D}
\newcommand{\funcmemlocs}{\mathit{L}}
\newcommand{\funcmemlocsorig}{\memlocs_O}
\newcommand{\funcmemlocsdup}{\memlocs_D}
\newcommand{\funcmemlocscorr}{\funcmemlocs_\mathit{D}}
\newcommand{\funcmemlocscorrinv}{{\funcmemlocs_\mathit{D}}^{\!-1}}
\newcommand{\memvalue}{v}
\newcommand{\memvalues}{\mathcal{V}}
\newcommand{\funcmemlocsin}{\funcmemlocs_{\mathit{in}}}
\newcommand{\funcmemlocsout}{\funcmemlocs_{\mathit{out}}}
\newcommand{\funcdupsingle}{\mathit{Dup}}
\newcommand{\funcdupseq}{\funcdupsingle}
\newcommand{\qedcons}{\mathit{QEDcons}}
\newcommand{\instrset}{I}
\newcommand{\opcodeset}{\mathit{Op}}
\newcommand{\opcode}{\mathit{op}}
\newcommand{\funcopcode}{\mathit{op}}
\newcommand{\instrsetorig}{\instrset_O}
\newcommand{\instrsetdup}{\instrset_D}
\newcommand{\instr}{i}
\newcommand{\instrseq}{\boldsymbol{\instr}}
\newcommand{\instrargs}[1]{\instr_{#1}}
\newcommand{\instrorig}{\instr_O}
\newcommand{\instrorigseq}{\boldsymbol{\instrorig}}
\newcommand{\instrorigargs}[1]{\instr_{O,#1}}
\newcommand{\instrdup}{\instr_D}
\newcommand{\instrdupseq}{\boldsymbol{\instrdup}}
\newcommand{\instrdupargs}[1]{\instr_{D,#1}}
\newcommand{\transrel}{T}
\newcommand{\transrelmulti}{\transrel}
\newcommand{\statepath}{\boldsymbol{\memstate}}
\newcommand{\instrnop}{\instrargs{\mathit{nop}}}
\newcommand{\transsys}{\mathcal{P}}
\newcommand{\specfunc}{\mathit{Spec}}
\newcommand{\specfuncinstr}[2]{\specfunc_{#1}(#2)}
\newcommand{\specfuncinstrname}[1]{\specfunc_{#1}}
\newtheorem{definition}{Definition}
\newtheorem{proposition}{Proposition}
\newtheorem{corollary}{Corollary}
\newtheorem{lemma}{Lemma}
\newtheorem{theorem}{Theorem}
\newtheorem{example}{Example}
\begin{document}

\title{A Theoretical Framework for Symbolic \\ Quick Error Detection \thanks
  {This work was supported by the Defense Advanced Research
  Projects Agency, grant FA8650-18-2-7854. \textbf{Article to appear in Proc.~FMCAD 2020.}}}

\author{
  \IEEEauthorblockN{Florian Lonsing, Subhasish Mitra, and Clark Barrett}
\IEEEauthorblockA{
    Computer Science Department, Stanford University, Stanford, CA 94305, USA\\
    E-mail: \{lonsing, subh, barrett\}@stanford.edu
  }
}

\maketitle

\thispagestyle{plain}
\pagestyle{plain}

\begin{abstract}
Symbolic quick error detection (SQED) is a formal pre-silicon
verification technique targeted at processor designs. 
It leverages bounded model checking
(BMC) to check a design for counterexamples to a self-consistency
property: given the instruction set architecture (ISA) of the design,
executing an instruction sequence twice on the same inputs must always
produce the same outputs. Self-consistency is a universal,
implementation-independent property. Consequently, in contrast to traditional verification approaches that use
implementation-specific assertions (often generated manually), SQED does not require a
full formal design specification or manually-written properties.
Case studies have shown that SQED is
effective for commercial designs and that SQED substantially improves design
productivity. However, until now there has been no formal
characterization of its bug-finding capabilities.  We aim to
close this gap by laying a formal foundation for SQED. We use a
transition-system processor model and define the notion of a bug using an
abstract specification relation.
We prove the soundness of SQED, i.e., that any bug
reported by SQED is in fact a real bug in the processor.  Importantly, this result
holds regardless of what the actual specification relation is.
We next describe conditions under which SQED is complete, that is, what
kinds of bugs it is guaranteed to find.  We show that for a large class of bugs,
SQED can always find a trace exhibiting the bug. 
Ultimately, we prove full completeness of a variant of SQED that uses specialized state reset instructions.
Our results enable a rigorous understanding of SQED and its bug-finding
capabilities and give insights on how to optimize implementations of
SQED in practice.
\end{abstract}


%
\IEEEpeerreviewmaketitle


\section{Introduction}

\bstctlcite{IEEEexample:BSTcontrol}

Pre-silicon verification of HW designs given as models in a HW
description language (e.g., Verilog) is a critical step in
HW design.  Due to the steadily increasing
complexity of designs, it is crucial to detect logic design bugs
before fabrication to avoid more difficult and costly debugging in
post-silicon validation.

Formal techniques such as bounded model checking (BMC)~\cite{DBLP:conf/tacas/BiereCCZ99} have an advantage over
traditional pre-silicon verification techniques such as simulation in that they are exhaustive up to the
BMC bound. Hence, formal techniques provide valuable guarantees about the
correctness of a design under verification (DUV) with respect to the checked 
properties. However, in traditional assertion-based formal verification techniques,
these properties are implementation-specific and must be written manually based on
expert knowledge about the DUV.
Moreover, it is a well-known, long-standing challenge that sets of
manually-written, implementation-specific properties might be insufficient to
detect all bugs present in a DUV~\cite{DBLP:conf/charme/KatzGG99,DBLP:conf/tacas/ChocklerKV01,DBLP:conf/fmcad/Claessen07,DBLP:conf/date/GrosseKD07,DBLP:conf/dac/ChocklerKP10}.

Symbolic quick error detection (SQED)~\cite{DBLP:conf/itc/LinSBM15,8355908,DBLP:conf/date/SinghDSSGFSKBEM19,DBLP:conf/iccad/LonsingGMNSSYMB19} is a formal pre-silicon
verification technique targeted at processor designs. In sharp
contrast to traditional formal approaches, SQED does not require
manually-written properties or a formal specification of the
DUV. Instead, it checks whether a self-consistency~\cite{DBLP:conf/fmcad/JonesSD96}
property holds in the DUV.
The self-consistency property employed by SQED is universal and
implementation-independent. Each instruction in the instruction set architecture (ISA) of the DUV is
interpreted as a function in a mathematical sense.  The
self-consistency check then amounts to checking whether the outputs
produced by executing a particular instruction sequence match if the
sequence is executed twice, assuming the inputs to the two sequences also match.

SQED leverages BMC to exhaustively explore
all possible instruction sequences up to a certain length starting from a set of initial
states.  Several case studies have demonstrated that SQED is highly
effective at producing short bug traces by finding counterexamples to
self-consistency in a variety of processor designs, including industrial
designs~\cite{DBLP:conf/date/SinghDSSGFSKBEM19}. Moreover, SQED substantially increases verification productivity.

However, until now there has been no rigorous theoretical understanding of
(A) whether counterexamples to self-consistency found by
SQED always correspond to actual bugs in the DUV---the \emph{soundness} of
SQED---and (B) whether for each bug in the DUV there exists a
counterexample to self-consistency that SQED can find---the
\emph{completeness} of SQED.
This paper makes significant progress towards closing this gap.

We model a processor as a transition system.  This model abstracts
away implementation-level details, yet is sufficiently precise to
formalize the workings of SQED.
To prove soundness and (conditional) completeness of SQED, we need to
establish a correspondence between counterexamples to self-consistency
and bugs in a DUV. In our formal model we achieve this correspondence
by first defining the correctness of instruction executions by means of a
general, abstract specification.  A bug is then a violation of this
specification. The abstract specification expresses the following general and
natural property we expect to hold for actual DUVs: an instruction
writes a correct output value into a destination location and does not modify
any other locations.

As \textbf{our main results}, we prove soundness and conditional completeness of
SQED. For soundness, we
prove that if SQED reports a counterexample to the universal
self-consistency property, then the processor has a bug. This result
shows that SQED does not produce spurious counterexamples.
Importantly, this result holds regardless of the actual specification,
confirming that SQED does not depend on such implementation-specific details.
For completeness, we prove that if the processor has a bug then, under modest
assumptions, there exists a counterexample to self-consistency that can
be found by SQED. 
We also show that SQED can be made fully (unconditionally) complete with
additional HW support in the form of specialized state reset instructions.
Our results enable a rigorous
understanding of SQED and its bug-finding capabilities in actual
DUVs and provide
insight on how to optimize implementations of SQED.

In the following, we first present an overview of SQED from a
theoretical perspective (Section~\ref{sec:sqed:overview}). 
Then we define our
transition system model of processors (Section~\ref{sec:formal:model})
and formalize the correctness of instruction executions in terms of an
abstract specification relation (Section~\ref{sec:abs:spec}). After
establishing a correspondence between the abstract specification and
the self-consistency property employed by SQED
(Section~\ref{sec:qed:cons}), we prove 
soundness and (conditional) completeness of SQED
(Section~\ref{sec:main:proofs}). We conclude with a discussion of
related work and future research directions
(Sections~\ref{sec:related:work} and ~\ref{sec:conclusion}).


\section{Overview of SQED} \label{sec:sqed:overview}

We first informally introduce the basic concepts and terminology related to
SQED. 
Fig.~\ref{fig:sqed:overview} shows an overview of the high-level workflow.
\begin{figure*}
  \centering
  \subfloat[]{\includegraphics[scale=0.45]{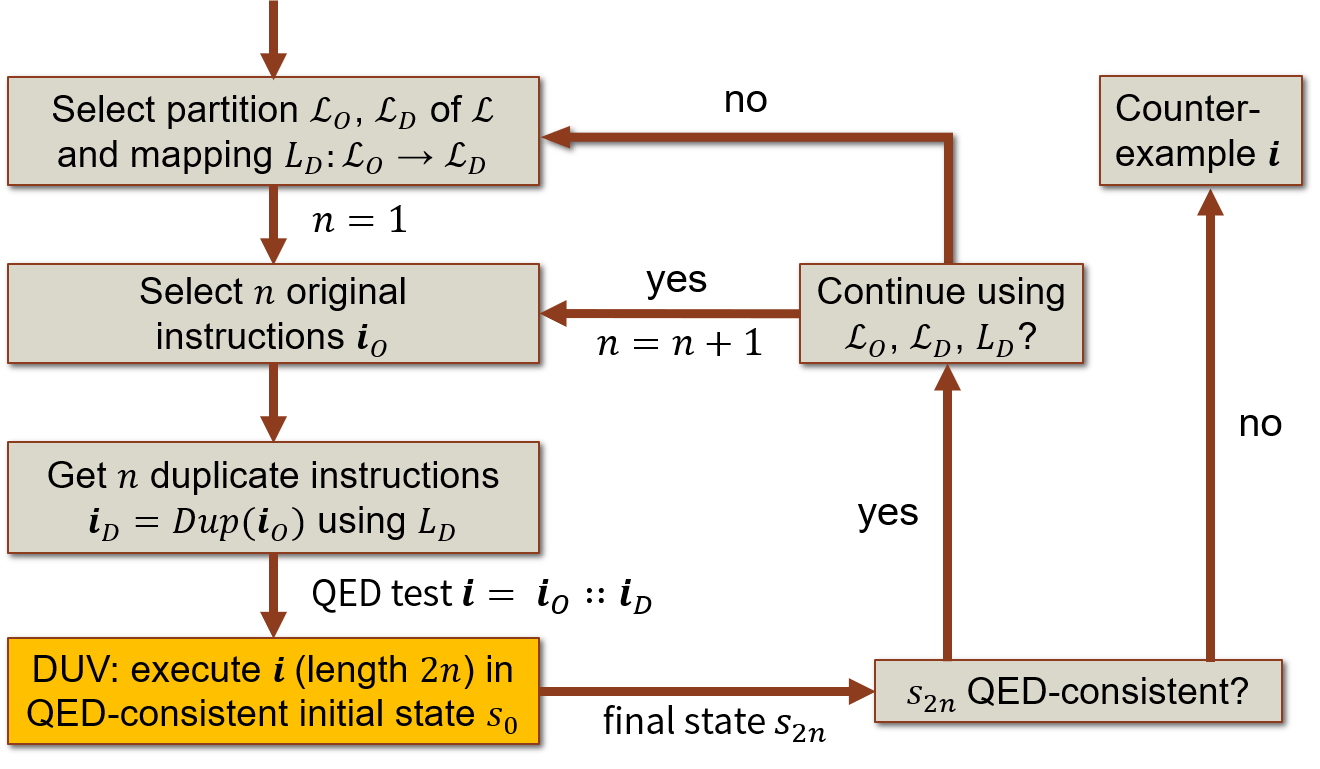} \label{fig:sqed:overview}}
  \hspace*{0.125cm}
  \subfloat[]{\raisebox{1.5cm}{\includegraphics[scale=0.45]{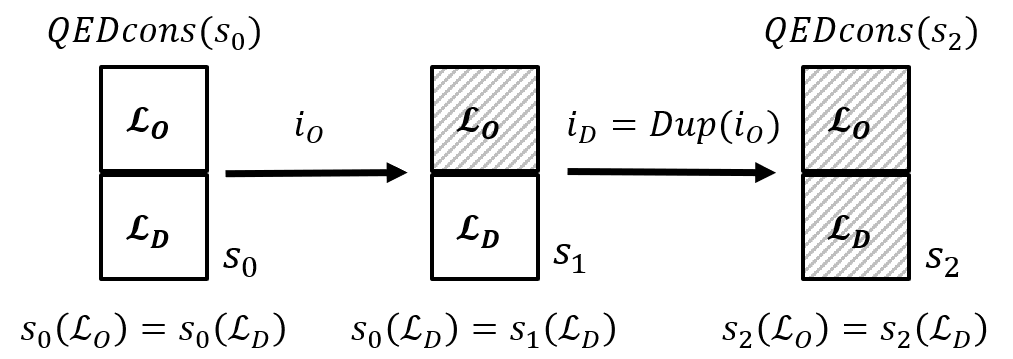}} \label{fig:sqed:qedcons}}
\caption{SQED workflow from a theoretical perspective (a) and
  illustration of executing the QED test $\instrseq = \instrorig ::
  \instrdup$ in Example~\ref{ex:qed:test} (b).}
\label{fig:sqed}
\end{figure*}
Given a processor design $\transsys$, i.e., the DUV, SQED is based
on symbolic execution of instruction sequences using BMC. We assume
that an \emph{instruction} $\instr = (\opcode, \memloc, (\memloc',\memloc''))$ consists of an opcode $\opcode$, an output
location $\memloc$, and a pair $(\memloc',\memloc'')$ of input
locations.\footnote{This model is used for simplicity, but it could
easily be extended to allow instructions with additional inputs or outputs.}
\emph{Locations} are an abstraction used to represent registers and memory locations.

The self-consistency check is based on executing two instructions that should always produce the same result.
The two instructions are called an 
\emph{original} and a \emph{duplicate instruction}, respectively.
The duplicate instruction has the \emph{same opcode} as the
original one, i.e., it implements the same functionality, but it
operates on different input and output locations. The locations
on which the duplicate instruction operates are determined by an \emph{arbitrary but fixed} \emph{bijective function} $\funcmemlocscorr:
\funcmemlocsorig \rightarrow \funcmemlocsdup$ between two subsets
$\funcmemlocsorig$, the \emph{original locations}, and
$\funcmemlocsdup$, the \emph{duplicate locations}, that form a
partition of the set $\memlocs$ of all locations in $\transsys$.
An original instruction can only use locations in $\funcmemlocsorig$.  An \emph{instruction
 duplication function} $\funcdupsingle$ then maps any original
instruction $\instrorig$ to its duplicate $\instrdup$ by copying the opcode and then
applying $\funcmemlocscorr$ to its locations.

\begin{example} \label{ex:sqed:instr:dup}
Let $\memlocs = \{0, \ldots, 31\}$ be the identifiers of~32
registers of a processor $\transsys$, and consider the partition
$\funcmemlocsorig = \{0,1, \ldots, 15\}$ and $\funcmemlocsdup =
\{16,17, \ldots, 31\}$.  Let $\instrorig = (\mathsf{ADD}, \memloc_{12},
(\memloc_4,\memloc_8))$ be an original register-type ADD instruction
operating on registers $4, 8$, and $12$.
Using $\funcmemlocscorr(k) = k + 16$, we obtain 
$\funcdupsingle(\instrorig) = \instrdup =
(\mathsf{ADD}, \memloc_{28}, (\memloc_{20},\memloc_{24}))$. 

Consider a different partition $\funcmemlocsorig' = \{0,2,4, \ldots,
30\}$ and $\funcmemlocsdup' = \{1,3,5, \ldots, 31\}$ and 
function $\funcmemlocscorr'(k) = k+1$. For this function,
$\funcdupsingle(\instrorig) = (\mathsf{ADD}, \memloc_{13},
(\memloc_{5},\memloc_{9}))$.
\end{example}

Self-consistency checking is implemented using \emph{QED
tests}. A QED test is an instruction sequence $\instrseq =
\instrorigseq :: \instrdupseq$ consisting of a sequence
$\instrorigseq$ of $n$ original instructions followed by a corresponding
sequence $\instrdupseq = \funcdupsingle(\instrorigseq)$ of $n$
duplicate instructions (where operator ``::'' denotes concatenation).
A QED test $\instrseq$ is symbolically executed from a \emph{QED-consistent state}, that is,
a state where the value stored in each original location $\memloc$ is the same
as the value stored in its corresponding duplicate location $\funcmemlocsdup(\memloc)$.
The resulting final state after executing $\instrseq$ should then also be QED-consistent.
Fig.~\ref{fig:sqed:overview} illustrates the
workflow. A QED test $\instrseq$ \emph{succeeds} if the final state
that results from executing $\instrseq$ is QED-consistent; otherwise
it \emph{fails}.  Starting the execution in a QED-consistent state
guarantees that original and duplicate instructions receive the same
input values. Thus, if the final state is not QED-consistent, then this
indicates that some pair of original and duplicate instructions behaved differently.

\begin{example} \label{ex:qed:test}
Consider Fig.~\ref{fig:sqed:qedcons} and the QED test $\instrseq = \instrorig :: \instrdup$ consisting
of one original instruction $\instrorig$ and its duplicate
$\funcdupsingle(\instrorig) = \instrdup$ for some function $\funcmemlocscorr$. Suppose that 
$\instrseq$ is executed in a QED-consistent state $\memstateargs{0}$ 
(denoted by $\qedcons(\memstateargs{0})$ and $\memstateargs{0}(\funcmemlocsorig) =
\memstateargs{0}(\funcmemlocsdup)$) and both
$\instrorig$ and $\instrdup$ execute correctly. Instruction
$\instrorig$ produces state $\memstateargs{1}$, where the values
at duplicate locations remain unchanged, i.e.,
$\memstateargs{0}(\funcmemlocsdup) = \memstateargs{1}(\funcmemlocsdup)$, because $\instrorig$ operates on original
locations only.  When instruction $\instrdup$ is executed in state
$\memstateargs{1}$, it modifies only duplicate locations. The final state
$\memstateargs{2}$ is QED-consistent (denoted by
$\qedcons(\memstateargs{2})$ and $\memstateargs{2}(\funcmemlocsorig) =
\memstateargs{2}(\funcmemlocsdup)$), and thus QED
test $\instrseq$ succeeds.
\end{example}

\begin{example}[Bug Detection] \label{ex:bug:detection}
Consider processor $\transsys$ and $\funcmemlocsorig$
and $\funcmemlocsdup$ from
Example~\ref{ex:sqed:instr:dup}. Let $\instrorigargs{1} =
(\mathsf{ADD}, \memloc_{12}, (\memloc_4,\memloc_{15}))$ and
$\instrorigargs{2} = (\mathsf{MUL}, \memloc_{15},
(\memloc_{12},\memloc_{12}))$ be original register-type addition
and multiplication instructions. Using $\funcmemlocscorr(k) = k + 16$,
we obtain $\funcdupsingle(\instrorigargs{1}) = \instrdupargs{1} =
(\mathsf{ADD}, \memloc_{28}, (\memloc_{20},\memloc_{31}))$ and
$\funcdupsingle(\instrorigargs{2}) = \instrdupargs{2} = (\mathsf{MUL},
\memloc_{31}, (\memloc_{28},\memloc_{28}))$.  Assume that $\transsys$
has a bug that is triggered when two MUL instructions are executed in
subsequent clock cycles, resulting in the corruption of the
output location of the second MUL instruction.\footnote{This
  scenario corresponds to a real bug in an out-of-order RISC-V
  design detected by SQED:
  \url{https://github.com/ridecore/ridecore/issues/4}.}  
Note that executing the QED test $\instrseq =
\instrorigargs{1},\instrorigargs{2} ::
\instrdupargs{1},\instrdupargs{2}$ in a QED-consistent initial state
produces a QED-consistent final state: the bug is not triggered by
$\instrseq$ because $\instrdupargs{1}$ is executed between $\instrorigargs{2}$
and $\instrdupargs{2}$.  A slightly longer test $\instrseq =
\instrorigargs{2},\instrorigargs{1},\instrorigargs{2} ::
\instrdupargs{2},\instrdupargs{1},\instrdupargs{2}$ does trigger the bug,
however, because the subsequence $\instrorigargs{2},\instrdupargs{2}$ of two back-to-back MULs
causes the first duplicate instruction $\instrdupargs{2}$ in $\instrseq$ to produce an incorrect
result at $\memloc_{31}$.  This incorrect result then propagates through the next two instructions,
resulting in a QED-inconsistent final state since the values at
$\memloc_{15}$ and $\memloc_{31}$, i.e., the output locations of
$\instrorigargs{2}$ and $\instrdupargs{2}$, differ.
\end{example}

QED-consistency is the universal, implementation-indepen\-dent property that is
checked in SQED. In practice, the property must refer to some basic
information about the design such as, e.g., symbolic register names, but this can
be generated automatically from a high-level ISA description~\cite{DBLP:conf/iccad/LonsingGMNSSYMB19}.
BMC is used to symbolically and exhaustively generate
all possible QED tests up to a certain length $2n$ (the BMC
bound).  BMC ensures that SQED will find the shortest
possible failing QED test first. The high-level workflow shown in
Fig.~\ref{fig:sqed:overview} allows for flexibility in choosing
the partition and mapping between original and duplicate
locations.  We rely on this flexibility for the results in this paper (Theorems~\ref{thm:main}
and~\ref{thm:hard:reset}).
Current SQED implementations 
use a predefined partition and mapping, based on which BMC
enumerates all possible QED tests. 
Extending implementations to have the BMC tool also choose a
partition and mapping could be explored in future work.

We refer to related
work~\cite{DBLP:conf/itc/LinSBM15,DBLP:conf/date/FadihehUNMBSK18,DBLP:conf/date/SinghDSSGFSKBEM19} 
for case studies that demonstrate the effectiveness of BMC-based SQED on a
variety of processor designs.
The scalability of SQED in
practice is determined by the scalability of the BMC tool being used. Thus, approaches
for improving scalability of BMC can also be applied to SQED,
e.g. abstraction, decomposition, and partial instantiation
techniques~\cite{DBLP:conf/itc/LinSBM15}.


\section{Instruction and Processor Model} \label{sec:formal:model}

We model a processor as a transition system containing an abstract set
of locations. The set of locations includes registers and memory
locations. A state of a processor consists of an \emph{architectural}
and a \emph{non-architectural} part. In a state transition that
results from executing an instruction, the architectural part of a
state is modified explicitly by updating the value at the output
location of the executed instruction. The architectural
part of a state is also called the \emph{software-visible} state of
the processor. It comprises those parts of the state that can be
updated by executing instructions of the user-level ISA of the
processor, such as memory locations and general-purpose registers. The
non-architectural part of a state comprises the remaining parts that
are updated only implicitly by executing an instruction, such as
pipeline or status registers.

Instructions are functions that take inputs from locations and write
an output to a location. We assume that every instruction produces its
result in one transition.  In our model, we abstract away
implementation details of complex processor designs (e.g.,
pipelined, out-of-order, multi-processor systems). This is for ease of
presentation and reasoning. However, many of these complexities can be
viewed as refinements of our abstraction, meaning that our formal results
still hold on complex models (i.e., our results
can be lowered to more detailed models such as those described
in~\cite{DBLP:conf/itc/LinSBM15,8355908}).  Working out the details of such
refinements is one important avenue for future work. 

\begin{definition}[Transition System]\label{def:revised:transsys}
  A processor is a \emph{transition system}~\cite{DBLP:conf/sagamore/Keller74,DBLP:journals/cacm/Keller76} $\transsys = (\memvalues, \memlocs, \memstatesetnarch, \meminitstatenarch, \opcodeset, \instrset, \transrel)$, where
  \begin{itemize}
  \item $\memvalues$ is a set of \emph{abstract data values}, 
  \item $\memlocs$ is a set of \emph{memory locations} (from which we
    define the set $\memstatesetarch$ of \emph{architectural states} as the set
    of total functions from locations to values, i.e. 
    $\memstatesetarch = \{\memstatearch \mid \memstatearch: \memlocs
    \rightarrow \memvalues\}$),
  \item $\memstatesetnarch$ is a set of \emph{non-architectural states} (from
    which we further define the set of all \emph{states} as $S = \memstatesetarch \times \memstatesetnarch$),
  \item $\meminitstatenarch\in\memstatesetnarch$ is a unique \emph{initial non-architectural state} (from which
    we define the set of \emph{initial states} as $\meminitstateset =
    \memstatesetarch \times \{ \meminitstatenarch \}$,
  \item $\opcodeset$ is a set of \emph{operation codes (opcodes)}, 
  \item $\instrset = \opcodeset \times \memlocs \times \memlocs^2$ is the set of \emph{instructions}, and 
  \item $\transrel: \memstateset \times \instrset \rightarrow \memstateset$ is the \emph{transition function}, which is total.
  \end{itemize}
\end{definition}

A state $\memstate \in \memstateset$ with $\memstate =
(\memstatearch,\memstatenarch)$ consists of an architectural part
$\memstatearch \in \memstatesetarch$ and a non-architectural  
part $\memstatenarch \in \memstatesetnarch$.
In the architectural part
$\memstatearch: \memlocs \rightarrow \memvalues$, $\memlocs$
represents all possible registers and memory locations, i.e., in
practical terms, $\memlocs$ is the address space of $\transsys$.
An initial state $\meminitstate \in \meminitstateset$ with
$\meminitstate = (\memstatearch, \meminitstatenarch)$ is defined by a
unique non-architectural part $\meminitstatenarch \in
\memstatesetnarch$ and an arbitrary architectural part $\memstatearch
\in \memstatesetarch$. 
We assume that $\meminitstatenarch \in \memstatesetnarch$ is unique to
make the exposition simpler. Our model could easily be extended to a
set of initial non-architectural states. 
The number $|\memlocs|$ of memory locations is arbitrary but fixed. 
We write $\memvalue =
\memstate(\memloc)$ to denote the value $\memvalue =
\memstatearch(\memloc)$ at location $\memloc \in \memlocs$
in state $\memstate = (\memstatearch,\memstatenarch)$. 
We also write
$(\memvalue, \memvalue') = \memstate(\memloc,\memloc')$ as shorthand
for $\memvalue = \memstate(\memloc)$ and \nolinebreak $\memvalue' =
\memstate(\memloc')$.

To formally define instruction duplication, we need to reason about \emph{original}
and \emph{duplicate} memory locations. To this end, we partition the set $\memlocs$
of memory locations into two sets of equal size, the \emph{original} and \emph{duplicate
locations} $\funcmemlocsorig$ and $\funcmemlocsdup$, respectively,
i.e., $\funcmemlocsorig \cap \funcmemlocsdup = \emptyset$, 
$\funcmemlocsorig \cup \funcmemlocsdup = \memlocs$, and $|\funcmemlocsorig| =
|\funcmemlocsdup|$.  Given
$\funcmemlocsorig$ and $\funcmemlocsdup$, we define an \textbf{arbitrary but
fixed} \emph{bijective function} $\funcmemlocscorr: \funcmemlocsorig
\rightarrow \funcmemlocsdup$ that maps an original location
$\memlocorig \in \funcmemlocsorig$ to its corresponding duplicate
location $\memlocdup =
\funcmemlocscorr(\memlocorig)$. The inverse of $\funcmemlocscorr$ is
denoted by $\funcmemlocscorrinv$ and is uniquely
defined.
We write $(\memlocdup,
\memlocdup') = \funcmemlocscorr(\memlocorig, \memlocorig')$ as shorthand for
$\memlocdup = \funcmemlocscorr(\memlocorig)$ and $\memlocdup' =
\funcmemlocscorr(\memlocorig')$.
Function $\funcmemlocscorr$ implements a
correspondence between original and duplicate locations, which we need 
to define QED-consistency (Definition~\ref{def:qed:consistency} below).

An instruction $\instr \in \instrset$ with $\instr = (\opcode,
\memloc, (\memloc',\memloc''))$ is defined by an opcode
$\opcode \in \opcodeset$, an output location $\memloc \in \memlocs$, and a pair of input locations $(\memloc',\memloc'') \in 
\memlocs^2$.
Function $\funcopcode: \instrset \rightarrow \opcodeset$ maps an
instruction to its opcode $\funcopcode(\instr)$. 
Functions $\funcmemlocsout: \instrset \rightarrow
\memlocs$ and $\funcmemlocsin: \instrset \rightarrow
\memlocs^2$ map an instruction $\instr$ to its output and input
locations $\funcmemlocsout(\instr) = \memloc$ and
$\funcmemlocsin(\instr) = (\memloc',\memloc'')$, respectively.
Given a state $\memstate = (\memstatearch,\memstatenarch)$, instruction $\instr$ reads values in
$\memstate$ from its input locations $\funcmemlocsin(\instr)$ and
writes a value to its output location $\funcmemlocsout(\instr)$,
resulting in a transition to a new state $\memstate' = (\memstatearch',\memstatenarch')$, written as
$\memstate' = \transrel(\memstate, \instr)$. The
transition function $\transrel$ is total, i.e., for every instruction
$\instr$ and state $\memstate$, there exists a successor state
$\memstate' = \transrel(\memstate, \instr)$.
As mentioned above, we have kept the model simple in order to make the
presentation more accessible, but our results can be lifted to many extensions,
including, e.g., more complicated kinds of instructions or instructions with
enabledness conditions \nolinebreak cf.~\cite{DBLP:journals/todaes/HuangZSVGM19}.

We write $\instrseq \in \instrset^{n}$ and $\statepath \in
\memstateset^{n}$ to denote sequences $\instrseq = \langle
\instrargs{1}, \ldots, \instrargs{n} \rangle$ and $\statepath =
\langle \memstateargs{1}, \ldots, \memstateargs{n} \rangle$ of $n$
instructions and $n$ states, respectively. We will use $::$ for sequence
concatenation and extend the transition
function $\transrel$ to sequences as follows.

\begin{definition}[Path] \label{def:paths}
Given sequences $\instrseq = \langle \instrargs{1}, \ldots,
\instrargs{n} \rangle$ and $\statepath = \langle \memstateargs{1},
\ldots, \memstateargs{n} \rangle$ of $n$ instructions and states,
$\statepath$ is a \emph{path} from state $\memstate_0 \in \memstateset$ to $\memstate_n$ via
$\instrseq$, written $\statepath = \transrelmulti(\memstateargs{0},
\instrseq)$, iff $\bigwedge^{n-1}_{k =
  0}  \memstateargs{k+1} = \transrel(\memstateargs{k}, \instrargs{k+1})$.
\end{definition}

\noindent
If $\statepath = \transrelmulti(\memstateargs{0},
\instrseq)$, then
for convenience we also write $\memstateargs{n} =
\transrelmulti(\memstateargs{0}, \instrseq)$ to denote the final state
$\memstateargs{n}$.

\begin{definition}[Reachable State] \label{def:reachable:state}
A state $\memstate$ is \emph{reachable}, written $\reachable{\memstate}$,
iff $\memstate = \transrelmulti(\memstateargs{0}, \instrseq)$ for some
$\memstateargs{0} \in \meminitstateset$ and instruction sequence $\instrseq$.
\end{definition}

The set $\instrset$ of instructions contains as proper subsets
the sets of \emph{original} and \emph{duplicate instructions}, $\instrsetorig$
and $\instrsetdup$, respectively.
Original (duplicate) instructions operate only on original (duplicate)
locations, i.e., $\forall \instrorig \in
\instrsetorig.\ \funcmemlocsin(\instrorig) \in \funcmemlocsorig^2
\wedge \funcmemlocsout(\instrorig) \in \funcmemlocsorig$ and
$\forall \instrdup \in \instrsetdup.\ \funcmemlocsin(\instrdup)
\in \funcmemlocsdup^2 \wedge \funcmemlocsout(\instrdup) \in
\funcmemlocsdup$. Given these definitions, we formalize instruction
duplication as follows.

\begin{definition}[Instruction Duplication] \label{def:single:instr:dup}
Let $\funcdupsingle: \instrsetorig \rightarrow \instrsetdup$ be an \emph{instruction
duplication function} that maps an original instruction $\instrorig = (\opcode,
\memlocorig, (\memlocorig',\memlocorig''))$ to a duplicate instruction
$\instrdup = \funcdupsingle(\instrorig) = (\opcode,
\funcmemlocscorr(\memlocorig), \funcmemlocscorr(\memlocorig', \memlocorig''))$
with respect to the bijective function $\funcmemlocscorr$.
\end{definition}

\noindent
An original instruction and its duplicate have the same opcode. We write $\instrorigseq \in \instrsetorig^{n}$ and $\instrdupseq \in
\instrsetdup^{n}$ to denote sequences $\instrorigseq = \langle
\instrorigargs{1}, \dots, \instrorigargs{n} \rangle$ and $\instrdupseq
= \langle \instrdupargs{1}, \dots, \instrdupargs{n} \rangle$ of $n$
original and duplicate instructions, respectively.  We lift
$\funcdupsingle$ in the natural way also to sequences of instructions
as follows.

\begin{definition}[Instruction Sequence Duplication]\label{def:seq:instr:dup}
Let $\instrorigseq = \langle \instrorigargs{1}, \dots, \instrorigargs{n}
\rangle$ be a sequence of original instructions.  Then 
$\funcdupseq(\instrorigseq) = \langle \funcdupsingle(\instrorigargs{1}), \dots,
\funcdupsingle(\instrorigargs{n}) \rangle$.
\end{definition}


\section{Formalizing Correctness} \label{sec:abs:spec}

We formalize the correctness of instruction executions in a processor
$\transsys$ using an abstract specification relation. We then
link this abstract specification to QED-consistency, the
self-consistency property employed by SQED (Section~\ref{sec:qed:cons}
below).

For our formalization, we assume that every opcode $\opcode \in \opcodeset$
has a \emph{specification function} $\specfuncinstrname{\opcode}: \memvalues^2
\rightarrow \memvalues$ that specifies how the opcode computes an output value
from input values.  Using this family of functions, we define an overall
\emph{abstract specification relation} $\specfunc \subseteq \memstateset \times
\instrset \times \memstateset$, which expresses when an instruction $\instr \in \instrset$ can transition to a state
$\memstate' \in \memstateset$ from a state $\memstate \in
\memstateset$ while respecting the opcode specification. 

\begin{definition}[Abstract Specification] \label{def:spec} $\forall\, \memstate, \memstate' \in \memstateset,\,  \instr \in \instrset.$
  \begin{align}
     \specfunc( & \memstate, \instr, \memstate') \leftrightarrow  \forall \memloc \in \memlocs.\ \nonumber \\  
     & (\memloc \not = \funcmemlocsout(\instr) \rightarrow \memstate(\memloc) = \memstate'(\memloc)) \wedge{} \label{assume:trans:spec} \\ 
   &  (\memloc = \funcmemlocsout(\instr) \rightarrow \memstate'(\memloc) = \specfuncinstr{\funcopcode(\instr)}{\memstate(\funcmemlocsin(\instr))})  \nonumber  
  \end{align}
\end{definition}

\noindent
Equation~\eqref{assume:trans:spec} states general and
natural properties that we expect to hold for a processor $\transsys$. If an
instruction $\instr$ executes according to its specification, then the
values at locations that are not output locations of 
$\instr$ are unchanged.
Additionally, the value produced at the output location of the instruction must agree with the value specified by
function $\specfuncinstrname{\funcopcode(\instr)}$.
Note that the specification relation $\specfunc$ specifies only how
the architectural part of a state is updated by a transition (not
the non-architectural part). Consequently, there might exist multiple
states whose non-architectural parts satisfy the right-hand side of~\eqref{assume:trans:spec}. This is why
$\specfunc$ is a relation rather than a function.
As special cases of~\eqref{assume:trans:spec}, original and duplicate instructions have the following properties:
  \begin{align}
\forall \memstate, \memstate' \in \memstateset,\  & \instrorig \in 
\instrsetorig, \memlocorig \in \funcmemlocsorig, \instrdup \in
\instrsetdup, \memlocdup \in \funcmemlocsdup. \nonumber \\
  & ( \specfunc(\memstate, \instrorig, \memstate') \rightarrow
 \memstate(\memlocdup) = \memstate'(\memlocdup) ) \wedge{} \label{cor:trans:spec:orig:instr} \\
  & ( \specfunc(\memstate, \instrdup, \memstate') \rightarrow 
\memstate(\memlocorig) = \memstate'(\memlocorig) ) \label{cor:trans:spec:dup:instr} 
    \end{align}

\noindent
Equations~\eqref{cor:trans:spec:orig:instr}
and~\eqref{cor:trans:spec:dup:instr} express that the
execution of an original (duplicate) instruction does not change the
values at duplicate (original) locations if the
instruction executes according to its specification.
The following \emph{functional congruence} property of instructions also follows from~\eqref{assume:trans:spec}:
\begin{align}
  \forall\, & \memstate_0,  \memstate_1, \memstate', \memstate'' \in
  \memstateset, \instr,\instr' \in \instrset. \nonumber \\ 
  \big[ & \funcopcode(\instr)=\funcopcode(\instr') \wedge \specfunc( \memstate_0, \instr, \memstate') \wedge   
     \specfunc(\memstate_1, \instr', \memstate'') \wedge{} \label{def:func:congr:instr} \\  
     & \memstate_0(\funcmemlocsin(\instr))
    = \memstate_1(\funcmemlocsin(\instr'))
    \big] \rightarrow \memstate'(\funcmemlocsout(\instr)) = \memstate''(\funcmemlocsout(\instr')) \nonumber
\end{align}

\noindent
By functional congruence, if two instructions with the same opcode are executed on inputs with the same
values, then the output values are the same.
We next define the correctness of a
processor $\transsys$ based on the abstract specification $\specfunc$.

\begin{definition}[Correctness] \label{def:correctness:weak:new:bug}
  A processor $\transsys$ is \emph{correct} with respect to
  specification $\specfunc$ iff  
  $\forall\, \instr \in \instrset, \memstate \in
    \memstateset.\: \reachable{\memstate}  \rightarrow 
  \specfunc(\memstate, \instr, \transrelmulti(\memstate, \instr))$.
\end{definition}
\noindent
Correctness requires every instruction to execute according to the abstract specification $\specfunc$ in 
every reachable state of $\transsys$.

A \emph{bug} in $\transsys$ is a counterexample to correctness, i.e., an
instruction that fails in at least one (not necessarily initial) reachable
state and may or may not fail in other
states. 

\begin{definition}[Bug] \label{def:bug:pair}
A \emph{bug with respect to specification $\specfunc$} in a processor $\transsys$ is defined by a
pair $\bug = \langle \instr_b, \triggerstatesnoarg \rangle$ consisting of
an instruction $\instr_b \in \instrset$ and a non-empty set $\triggerstatesnoarg
\subseteq \memstateset$ of states such that $\triggerstatesnoarg =
\{\memstate \in \memstateset \mid \reachable{\memstate} \wedge \neg \specfunc(\memstate, \instr_b, \transrelmulti(\memstate,
\instr_b))\}$.
\end{definition}

\noindent
The above definitions rely on the notion of an abstract specification
relation. Having \emph{some} abstract specification is a
\emph{theoretical} construct that is necessary to formally
characterize instruction failure and establish formal proofs about
SQED. However, it is important to note that to apply SQED in \emph{practice},
we do not need to know what the abstract specification relation is.

A bug $\langle \instr_b, \triggerstatesnoarg \rangle$ is precisely
characterized by the set $\triggerstatesnoarg$ of all reachable states in which
$\instr_b$ fails. The following proposition follows
from Definitions~\ref{def:correctness:weak:new:bug}
and~\ref{def:bug:pair}.

\begin{proposition} \label{prop:bug:correctness:new}
A processor $\transsys$ has a bug with respect to
  specification $\specfunc$ iff it is not correct with respect to $\specfunc$.
\end{proposition}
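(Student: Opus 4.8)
The plan is to prove the biconditional by unfolding the definitions of correctness (Definition~\ref{def:correctness:weak:new:bug}) and bug (Definition~\ref{def:bug:pair}), and showing that each direction reduces to a straightforward logical manipulation of quantifiers. The key observation is that correctness is the universally quantified statement $\forall\, \instr \in \instrset, \memstate \in \memstateset.\: \reachable{\memstate} \rightarrow \specfunc(\memstate, \instr, \transrelmulti(\memstate, \instr))$, so its negation is the existential statement asserting that some reachable state $\memstate$ and some instruction $\instr$ satisfy $\neg\,\specfunc(\memstate, \instr, \transrelmulti(\memstate, \instr))$. The entire content of the proposition is thus the equivalence between this existential failure witness and the existence of a bug $\bug = \langle \instr_b, \triggerstatesnoarg \rangle$ with $\triggerstatesnoarg \neq \emptyset$.

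First I would prove the forward direction by contraposition in the convenient form: assume $\transsys$ has a bug $\langle \instr_b, \triggerstatesnoarg \rangle$. By Definition~\ref{def:bug:pair}, $\triggerstatesnoarg$ is non-empty, so I may pick some $\memstate \in \triggerstatesnoarg$. By the defining set-comprehension for $\triggerstatesnoarg$, this $\memstate$ satisfies $\reachable{\memstate}$ and $\neg\,\specfunc(\memstate, \instr_b, \transrelmulti(\memstate, \instr_b))$. This single pair $(\instr_b, \memstate)$ directly witnesses the failure of the universal statement in Definition~\ref{def:correctness:weak:new:bug}, so $\transsys$ is not correct.

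For the converse, I would assume $\transsys$ is not correct. Negating the universal in Definition~\ref{def:correctness:weak:new:bug} yields an instruction $\instr_b \in \instrset$ and a state $\memstate$ with $\reachable{\memstate}$ and $\neg\,\specfunc(\memstate, \instr_b, \transrelmulti(\memstate, \instr_b))$. I then define the candidate trigger set $\triggerstatesnoarg = \{\memstate' \in \memstateset \mid \reachable{\memstate'} \wedge \neg\,\specfunc(\memstate', \instr_b, \transrelmulti(\memstate', \instr_b))\}$ exactly as required by Definition~\ref{def:bug:pair}. The witness $\memstate$ belongs to this set, so $\triggerstatesnoarg$ is non-empty, and hence $\bug = \langle \instr_b, \triggerstatesnoarg \rangle$ is a bug with respect to $\specfunc$.

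I expect no genuine obstacle here: the proposition is essentially the statement that ``there exists a failing instruction in some reachable state'' and ``there exists a bug pair'' are the same assertion, since Definition~\ref{def:bug:pair} packages every failing instruction together with its full (non-empty) set of failure states. The only point requiring a modicum of care is ensuring that the chosen instruction $\instr_b$ is held fixed when forming $\triggerstatesnoarg$ in the converse direction, so that the set-comprehension matches Definition~\ref{def:bug:pair} verbatim and the non-emptiness witness is the same state produced by negating the correctness quantifier.
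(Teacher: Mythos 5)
Your proof is correct and matches the paper's intent: the paper gives no explicit proof, stating only that the proposition ``follows from Definitions~\ref{def:correctness:weak:new:bug} and~\ref{def:bug:pair},'' and your argument is exactly that unfolding---a bug's non-empty trigger set witnesses the negation of the correctness universal, and conversely a correctness counterexample populates the trigger-set comprehension for its instruction. The only cosmetic quibble is that your first direction is a direct proof of ``bug $\rightarrow$ not correct,'' not a contraposition as you label it.
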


As special cases of processor correctness and bugs, respectively, we define correctness and
bugs with respect to instructions that are executed in an initial state only.

\begin{definition}[Single-Instruction Correctness] \label{def:si:correctness}
  Processor $\transsys$ is \emph{single-instruction correct} iff:
\[\forall\, \instr \in \instrset,
  \memstateargs{0} \in \meminitstateset.\:
  \specfunc(\memstateargs{0}, \instr, \transrelmulti(\memstateargs{0}, \instr)).\]
\end{definition}

\noindent
Single-instruction correctness implies that all instructions, i.e.,
all opcodes and all combinations of input and output locations,
execute correctly in all initial states. A \emph{single-instruction 
bug} is a counterexample to single-instruction correctness.

\begin{definition}[Single-Instruction Bug] \label{def:si:bug}
Processor $\transsys$ has a \emph{single-instruction bug} with respect to specification $\specfunc$ iff
$\exists\, \instr \in \instrset,
  \memstateargs{0} \in \meminitstateset.\:
  \neg \specfunc(\memstateargs{0}, \instr, \transrelmulti(\memstateargs{0}, \instr))$.
\end{definition}

Several approaches exist for single-instruction checking of a
processor, which is complementary to SQED (cf.~Section~\ref{sec:related:work}).


\section{Self-Consistency as QED-Consistency} \label{sec:qed:cons}

We now define QED-consistency
(cf.~Section~\ref{sec:sqed:overview}) as a property of states of a
processor $\transsys$ based on function $\funcmemlocscorr$. Then we formally
define the notion of QED test and show that for correct processors, QED tests
preserve QED-consistency.  This result is key to the proof of
the soundness in Section~\ref{sec:main:proofs} below.

\begin{definition}[QED-Consistency] \label{def:qed:consistency}
A state $\memstate$ is
\emph{QED-consistent}, written $\qedcons(\memstate)$, iff 
$\forall \memlocorig \in \funcmemlocsorig.\ \memstate(\memlocorig) =
\memstate(\funcmemlocscorr(\memlocorig))$.
\end{definition}

\noindent
QED-consistency is based on checking the architectural part of a
state. An equivalent condition can be formulated in
terms of duplicate locations:
$\forall \memlocdup \in \funcmemlocsdup.\ \memstate(\memlocdup) =
\memstate(\funcmemlocscorrinv(\memlocdup))$.

\begin{definition}[QED test] \label{def:canonical:seq}
An instruction sequence $\instrseq$
is a \emph{QED test} if $\instrseq = \instrorigseq ::
\funcdupseq(\instrorigseq)$ for some sequence $\instrorigseq$ of original
instructions.
\end{definition}
  
We link the abstract specification $\specfunc$ to the semantics of
original and duplicate instructions. This way, we obtain a notion of
functional congruence that readily follows as a special case from~\eqref{def:func:congr:instr}.

\begin{corollary}[Functional Congruence: Duplicate Instructions] \label{def:dup:specvalue}
Given $\instrorig \in \instrsetorig$ and $\instrdup \in \instrsetdup$
with $\instrdup = \funcdupsingle(\instrorig)$, the following holds
for all states $\memstate_0$, $\memstate_1$, $\memstate'$, and $\memstate''$:
\begin{align*}
 \big[  \specfunc( \memstate_0,  \instrorig, 
    \memstate') \wedge{} &
   \specfunc(\memstate_1, \instrdup, \memstate'') \wedge{}\\
     \memstate_0(\funcmemlocsin(\instrorig))
    & = \memstate_1(\funcmemlocscorr(\funcmemlocsin(\instrorig)))
    \big] \rightarrow \\
    & \memstate'(\funcmemlocsout(\instrorig))
    = \memstate''(\funcmemlocscorr(\funcmemlocsout(\instrorig))) 
\end{align*}
\end{corollary}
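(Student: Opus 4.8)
The plan is to derive this corollary directly from the general functional congruence property~\eqref{def:func:congr:instr} by instantiating it with $\instr := \instrorig$ and $\instr' := \instrdup$. The only work is to verify that the hypotheses of~\eqref{def:func:congr:instr} are met under the assumption $\instrdup = \funcdupsingle(\instrorig)$, and then to translate its conclusion into the specialized form stated here. Throughout, the key fact is Definition~\ref{def:single:instr:dup}, which tells us that duplication copies the opcode and applies $\funcmemlocscorr$ to the input and output locations, so that $\funcopcode(\instrdup) = \funcopcode(\instrorig)$, $\funcmemlocsin(\instrdup) = \funcmemlocscorr(\funcmemlocsin(\instrorig))$, and $\funcmemlocsout(\instrdup) = \funcmemlocscorr(\funcmemlocsout(\instrorig))$.

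First I would check the four premises of~\eqref{def:func:congr:instr}. The opcode-equality premise $\funcopcode(\instrorig) = \funcopcode(\instrdup)$ holds by the opcode-copying clause of Definition~\ref{def:single:instr:dup}. The two specification premises $\specfunc(\memstate_0, \instrorig, \memstate')$ and $\specfunc(\memstate_1, \instrdup, \memstate'')$ are exactly the first two conjuncts assumed in the corollary. For the input-agreement premise $\memstate_0(\funcmemlocsin(\instrorig)) = \memstate_1(\funcmemlocsin(\instrdup))$, I would rewrite $\funcmemlocsin(\instrdup)$ as $\funcmemlocscorr(\funcmemlocsin(\instrorig))$ using Definition~\ref{def:single:instr:dup}; the premise then coincides with the third assumed conjunct $\memstate_0(\funcmemlocsin(\instrorig)) = \memstate_1(\funcmemlocscorr(\funcmemlocsin(\instrorig)))$.

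With all premises discharged,~\eqref{def:func:congr:instr} yields $\memstate'(\funcmemlocsout(\instrorig)) = \memstate''(\funcmemlocsout(\instrdup))$. Rewriting $\funcmemlocsout(\instrdup)$ as $\funcmemlocscorr(\funcmemlocsout(\instrorig))$, again by Definition~\ref{def:single:instr:dup}, gives exactly the desired conclusion $\memstate'(\funcmemlocsout(\instrorig)) = \memstate''(\funcmemlocscorr(\funcmemlocsout(\instrorig)))$. I do not expect a genuine obstacle here: the statement is a pure instantiation of the general congruence property, and the only care needed is to apply the three identities from Definition~\ref{def:single:instr:dup} consistently, so that the undecorated hypotheses and conclusion of~\eqref{def:func:congr:instr} line up with the $\funcmemlocscorr$-decorated versions appearing in the corollary.
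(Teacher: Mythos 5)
Your proposal is correct and matches the paper exactly: the paper presents this statement as a special case that ``readily follows'' from~\eqref{def:func:congr:instr}, which is precisely your instantiation $\instr := \instrorig$, $\instr' := \instrdup$ with the opcode and location identities from Definition~\ref{def:single:instr:dup} discharging the premises and rewriting the conclusion. Nothing is missing.
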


\noindent
Corollary~\ref{def:dup:specvalue} states
that an original instruction $\instrorig$ produces the same value at its output
location as its duplicate instruction $\instrdup =  \funcdupsingle(\instrorig)$,
provided that these instructions execute in states
where the values at the respective input locations match.

We generalize Corollary~\ref{def:dup:specvalue} to show that after executing a pair of original and
duplicate instructions, the values at \emph{all} original locations
match the values at the corresponding duplicate locations, assuming those
values also matched before executing the instructions.

\begin{lemma}[cf.~Corollary~\ref{def:dup:specvalue}] \label{lem:func:congr}
Given $\instrorig \in \instrsetorig$ and $\instrdup \in \instrsetdup$
with $\instrdup = \funcdupsingle(\instrorig)$, the following holds
for all states $\memstate_0$, $\memstate_1$, $\memstate'$, and $\memstate''$:
\begin{align*}
   \big[  \specfunc( \memstate_0, \instrorig, 
     \memstate') \wedge{} 
     \specfunc(\memstate_1, & \instrdup, \memstate'') \wedge{} \\
     \forall \memlocorig \in
    \funcmemlocsorig.\ \memstate_0(\memlocorig)
    & = \memstate_1(\funcmemlocscorr(\memlocorig))
    \big] \rightarrow \\
  & \forall \memlocorig \in
    \funcmemlocsorig.\ \memstate'(\memlocorig)
    = \memstate''(\funcmemlocscorr(\memlocorig)) 
\end{align*}
\end{lemma}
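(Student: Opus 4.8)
The plan is to fix an arbitrary original location $\memlocorig \in \funcmemlocsorig$ and prove the pointwise equality $\memstate'(\memlocorig) = \memstate''(\funcmemlocscorr(\memlocorig))$ by a case distinction on whether $\memlocorig$ is the output location of $\instrorig$. Since $\instrdup = \funcdupsingle(\instrorig)$, Definition~\ref{def:single:instr:dup} gives $\funcmemlocsout(\instrdup) = \funcmemlocscorr(\funcmemlocsout(\instrorig))$, a fact I will use in both cases.

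First I would dispatch the case $\memlocorig = \funcmemlocsout(\instrorig)$ by invoking Corollary~\ref{def:dup:specvalue}. To apply it I must discharge its input-matching hypothesis $\memstate_0(\funcmemlocsin(\instrorig)) = \memstate_1(\funcmemlocscorr(\funcmemlocsin(\instrorig)))$. This follows from the lemma's own hypothesis $\forall \memlocorig \in \funcmemlocsorig.\ \memstate_0(\memlocorig) = \memstate_1(\funcmemlocscorr(\memlocorig))$ instantiated at each of the two input locations of $\instrorig$, which lie in $\funcmemlocsorig$ because original instructions read only from original locations. The conclusion of the corollary, namely $\memstate'(\funcmemlocsout(\instrorig)) = \memstate''(\funcmemlocscorr(\funcmemlocsout(\instrorig)))$, is then exactly the desired equality for this case.

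For the complementary case $\memlocorig \neq \funcmemlocsout(\instrorig)$, I would chain three equalities. Applying the first conjunct of Definition~\ref{def:spec} to $\specfunc(\memstate_0, \instrorig, \memstate')$ yields $\memstate'(\memlocorig) = \memstate_0(\memlocorig)$, since $\memlocorig$ is not the output location of $\instrorig$. Because $\funcmemlocscorr$ is bijective and $\memlocorig \neq \funcmemlocsout(\instrorig)$, we have $\funcmemlocscorr(\memlocorig) \neq \funcmemlocscorr(\funcmemlocsout(\instrorig)) = \funcmemlocsout(\instrdup)$, so the same conjunct of Definition~\ref{def:spec} applied to $\specfunc(\memstate_1, \instrdup, \memstate'')$ gives $\memstate''(\funcmemlocscorr(\memlocorig)) = \memstate_1(\funcmemlocscorr(\memlocorig))$. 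Combining these two with the hypothesis $\memstate_0(\memlocorig) = \memstate_1(\funcmemlocscorr(\memlocorig))$ closes the case.

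The main obstacle is really just the bookkeeping in the output-location case: one must recognize that the matching of \emph{all} original inputs supplied by the global hypothesis provides precisely the premise that Corollary~\ref{def:dup:specvalue} needs, and that the corollary's output-location conclusion coincides with the claim once $\funcmemlocsout(\instrdup) = \funcmemlocscorr(\funcmemlocsout(\instrorig))$ is noted. The non-output case is routine but crucially relies on the bijectivity of $\funcmemlocscorr$ to transport the inequality of locations across to the duplicate side, so that the ``unchanged away from the output'' clause of the specification applies on both the original and duplicate executions.
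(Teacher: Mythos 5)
Your proof is correct and takes essentially the same route as the paper's: fix an arbitrary $\memlocorig \in \funcmemlocsorig$, split on whether $\memlocorig = \funcmemlocsout(\instrorig)$, dispatch the output case via Corollary~\ref{def:dup:specvalue}, and handle the non-output case by the unchanged-location clause of Definition~\ref{def:spec} on both executions together with injectivity of $\funcmemlocscorr$ and the input-matching hypothesis. If anything, you are slightly more explicit than the paper in discharging the corollary's input-matching premise in the output case.
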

\begin{proof}
See appendix.\qedhere
\end{proof}

Lemma~\ref{lem:func:congr} leads to an important result that we need
to prove soundness of SQED
(Lemma~\ref{lem:bug:fail:qedtest:canonical:new:onlyif} below): executing a QED test
$\instrseq$ starting in a QED-consistent state results in a
QED-consistent final state if all instructions in $\instrseq$ execute
according to the abstract specification $\specfunc$ (cf.~Fig.~\ref{fig:sqed:qedcons}).

\begin{lemma}[QED-Consistency and QED tests] \label{lem:preserving:sep:seq:qed:cons}
Let $\instrseq = \langle \instrargs{1}, \ldots, \instrargs{2n} \rangle$ be a QED test,
let $\langle \memstateargs{0}, \ldots, \memstateargs{2n} \rangle$ be a
sequence of $2n+1$ states, and let $\specfunc$ be some abstract specification relation.  Then,
 \begin{align*}
    \qedcons(\memstate_0) \wedge
   \big( \bigwedge_{j := 0}^{2n-1} \specfunc(\memstate_{j}, \instrargs{j+1}, \memstate_{j+1}) & \big)
   \rightarrow \\ & \qedcons(\memstate_{2n})
 \end{align*}
\end{lemma}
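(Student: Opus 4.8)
The plan is to decompose the QED test into its original prefix $\instrorigseq = \langle \instrorigargs{1}, \ldots, \instrorigargs{n} \rangle$ (executed over the transitions $\memstateargs{0} \to \cdots \to \memstateargs{n}$) and its duplicate suffix $\instrdupseq = \funcdupseq(\instrorigseq)$ (executed over $\memstateargs{n} \to \cdots \to \memstateargs{2n}$), and to track how QED-consistency is carried from $\memstateargs{0}$ to $\memstateargs{2n}$. Two ``stability'' facts set the stage. First, since every $\instrorigargs{j+1}$ is an original instruction satisfying $\specfunc(\memstateargs{j}, \instrorigargs{j+1}, \memstateargs{j+1})$, Eq.~\eqref{cor:trans:spec:orig:instr} gives $\memstateargs{j}(\memlocdup) = \memstateargs{j+1}(\memlocdup)$ for every duplicate location; chaining these equalities for $j = 0, \ldots, n-1$ yields $\memstateargs{0}(\memlocdup) = \memstateargs{n}(\memlocdup)$ for all $\memlocdup \in \funcmemlocsdup$. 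Symmetrically, by Eq.~\eqref{cor:trans:spec:dup:instr}, the duplicate suffix leaves original locations untouched, so $\memstateargs{n}(\memlocorig) = \memstateargs{2n}(\memlocorig)$ for all $\memlocorig \in \funcmemlocsorig$.

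The core of the argument is an induction on $k \in \{0, \ldots, n\}$ establishing the invariant
\[
  \forall \memlocorig \in \funcmemlocsorig.\ \memstateargs{k}(\memlocorig) = \memstateargs{n+k}(\funcmemlocscorr(\memlocorig)),
\]
which compares the state reached after $k$ originals with the state reached after the corresponding $k$ duplicates. For the base case $k = 0$, QED-consistency of $\memstateargs{0}$ gives $\memstateargs{0}(\memlocorig) = \memstateargs{0}(\funcmemlocscorr(\memlocorig))$, and the first stability fact rewrites the right-hand side as $\memstateargs{n}(\funcmemlocscorr(\memlocorig))$.

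For the inductive step, I would apply Lemma~\ref{lem:func:congr} to the matched pair $\instrorigargs{k+1}$ and $\instrdupargs{k+1} = \funcdupsingle(\instrorigargs{k+1})$, instantiating its four states with $\memstate_0 = \memstateargs{k}$, $\memstate' = \memstateargs{k+1}$, $\memstate_1 = \memstateargs{n+k}$, and $\memstate'' = \memstateargs{n+k+1}$. The two specification premises of the lemma are exactly the hypotheses $\specfunc(\memstateargs{k}, \instrorigargs{k+1}, \memstateargs{k+1})$ and $\specfunc(\memstateargs{n+k}, \instrdupargs{k+1}, \memstateargs{n+k+1})$, and its remaining premise is precisely the induction hypothesis for $k$; the lemma's conclusion is then the invariant for $k+1$. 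Finally, instantiating the invariant at $k = n$ gives $\memstateargs{n}(\memlocorig) = \memstateargs{2n}(\funcmemlocscorr(\memlocorig))$, and combining this with the second stability fact $\memstateargs{n}(\memlocorig) = \memstateargs{2n}(\memlocorig)$ yields $\memstateargs{2n}(\memlocorig) = \memstateargs{2n}(\funcmemlocscorr(\memlocorig))$ for all $\memlocorig$, i.e.\ $\qedcons(\memstateargs{2n})$.

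The only delicate point is choosing the invariant so that the $k$-th original and the $k$-th duplicate are lined up across the midpoint $\memstateargs{n}$ despite being separated in time by the intervening instructions; once the invariant is phrased as a comparison of $\memstateargs{k}$ with $\memstateargs{n+k}$, Lemma~\ref{lem:func:congr} discharges each step mechanically and the two stability facts bridge the gap at the endpoints. I expect no other obstacles, since all the per-transition reasoning is already packaged in Eqs.~\eqref{cor:trans:spec:orig:instr}--\eqref{cor:trans:spec:dup:instr} and Lemma~\ref{lem:func:congr}.
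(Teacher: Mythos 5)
Your proposal is correct and follows essentially the same route as the paper's proof: the two ``stability facts'' are exactly the paper's repeated applications of Eqs.~\eqref{cor:trans:spec:orig:instr} and~\eqref{cor:trans:spec:dup:instr}, and your explicit induction with the invariant $\memstateargs{k}(\memlocorig) = \memstateargs{n+k}(\funcmemlocscorr(\memlocorig))$ is precisely what the paper compresses into ``by repeated application of Lemma~\ref{lem:func:congr}.'' If anything, your universally quantified invariant makes slightly more explicit why each application of Lemma~\ref{lem:func:congr} has its premise available (that premise quantifies over all original locations), a point the paper leaves implicit by fixing a single arbitrary $\memlocorig$.
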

\begin{proof}
Assuming the antecedent,
let $\memlocorig \in \funcmemlocsorig$ be arbitrary but fixed 
with $\memlocdup = \funcmemlocscorr(\memlocorig)$.  By repeated
application of~\eqref{cor:trans:spec:orig:instr}, we derive
$\memstateargs{0}(\memlocdup) = \memstateargs{1}(\memlocdup) = \ldots =
\memstateargs{n}(\memlocdup)$, and hence:
\begin{equation}
  \memstateargs{0}(\memlocdup) = \memstateargs{n}(\memlocdup) \label{eq:dup0-eq-dupn}
\end{equation}
by transitivity. By repeated
application of~\eqref{cor:trans:spec:dup:instr}, we derive:
\begin{equation}
  \memstateargs{n}(\memlocorig) = \memstateargs{2n}(\memlocorig) \label{eq:orgn-eq-org2n}
\end{equation}
Now, $\qedcons(\memstate_0)$ implies
$\memstateargs{0}(\memlocorig) = \memstateargs{0}(\funcmemlocscorr(\memlocorig))$,
from which it follows by \eqref{eq:dup0-eq-dupn} that
$\memstate_0(\memlocorig) = \memstateargs{n}(\funcmemlocscorr(\memlocorig))$.
By repeated application of Lemma~\ref{lem:func:congr}, we can next derive
$\memstate_j(\memlocorig) = \memstateargs{n+j}(\funcmemlocscorr(\memlocorig))$
for $1 \leq j \leq n$, and in particular, 
$\memstateargs{n}(\memlocorig) = \memstateargs{2n}(\funcmemlocscorr(\memlocorig))$.
Finally, by applying \eqref{eq:orgn-eq-org2n}, we get
$\memstateargs{2n}(\memlocorig) = \memstateargs{2n}(\funcmemlocscorr(\memlocorig))$.
Since $\memlocorig$ was chosen arbitrarily, $\qedcons(\memstateargs{2n})$ holds. \qedhere
\end{proof}


\section{Soundness and Conditional Completeness} \label{sec:main:proofs}

SQED checks a processor $\transsys$ for self-consistency by executing
QED tests and checking QED-consistency
(cf.~Fig~\ref{fig:sqed:overview}). We now define the correctness of
$\transsys$ in terms of QED tests that, when executed, always result
in QED-consistent states. This way, we establish a correspondence
between counterexamples to QED-consistency and bugs in $\transsys$. We
then prove our main results (Theorem~\ref{thm:main}) related to the bug-finding capabilities of
SQED, i.e., soundness and conditional completeness.

\begin{definition}[Failing and Succeeding QED Tests] \label{def:qed:test:failing}
Let $\instrseq$ be a QED test,
$\memstateargs{0}\in\meminitstateset$ an initial state such that
$\qedcons(\memstateargs{0})$ holds, and let
$\memstate=\transrelmulti(\memstateargs{0}, \instrseq)$.  We say that:
\begin{itemize}
\item QED test $\instrseq$ \emph{fails} if $\neg \qedcons(\memstate)$.
\item QED test $\instrseq$ \emph{succeeds} if $\qedcons(\memstate)$.
\end{itemize}
\end{definition}

\begin{definition}[Processor QED-Consistency] \label{def:proc:qed:cons}
  A processor $\transsys$ is \emph{QED-consistent} if all possible
  QED tests succeed.
\end{definition}

\begin{definition}[Processor QED-Inconsistency] \label{def:proc:qed:incons}
A processor $\transsys$ is \emph{QED-inconsistent} if some QED test fails.
\end{definition}

\begin{lemma} \label{lem:bug:fail:qedtest:canonical:new:onlyif}
Let $\transsys$ be a processor.  If
$\transsys$ is QED-inconsistent, then $\transsys$ is not correct with respect
to \emph{any} abstract specification relation.
\end{lemma}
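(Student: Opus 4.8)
The plan is to prove the contrapositive.The plan is to prove the contrapositive. The statement ``$\transsys$ is not correct with respect to \emph{any} abstract specification relation'' unfolds to ``for every $\specfunc$, $\transsys$ is not correct with respect to $\specfunc$,'' so its negation is ``there exists some $\specfunc$ with respect to which $\transsys$ is correct.'' Thus it suffices to show that if $\transsys$ is correct with respect to some fixed abstract specification relation $\specfunc$, then $\transsys$ is QED-consistent, i.e., every QED test succeeds (Definition~\ref{def:proc:qed:cons}).

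First I would fix an arbitrary QED test $\instrseq = \langle \instrargs{1}, \ldots, \instrargs{2n} \rangle$ and an arbitrary initial state $\memstateargs{0} \in \meminitstateset$ with $\qedcons(\memstateargs{0})$, and let $\langle \memstateargs{0}, \ldots, \memstateargs{2n} \rangle = \transrelmulti(\memstateargs{0}, \instrseq)$ be the induced path, so that $\memstateargs{k+1} = \transrel(\memstateargs{k}, \instrargs{k+1})$ for each $0 \le k \le 2n-1$ (Definition~\ref{def:paths}). The goal is to establish $\qedcons(\memstateargs{2n})$, which says exactly that this QED test succeeds.

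The key step is to feed this path into Lemma~\ref{lem:preserving:sep:seq:qed:cons}, whose antecedent requires $\qedcons(\memstateargs{0})$ together with $\specfunc(\memstateargs{j}, \instrargs{j+1}, \memstateargs{j+1})$ for every $0 \le j \le 2n-1$. The first conjunct holds by assumption on $\memstateargs{0}$. For the second, I would observe that each intermediate state $\memstateargs{k}$ is reachable (Definition~\ref{def:reachable:state}): it is the final state of the path from the initial state $\memstateargs{0}$ along the length-$k$ prefix $\langle \instrargs{1}, \ldots, \instrargs{k} \rangle$ of $\instrseq$. Correctness of $\transsys$ with respect to $\specfunc$ (Definition~\ref{def:correctness:weak:new:bug}) then yields $\specfunc(\memstateargs{k}, \instrargs{k+1}, \transrel(\memstateargs{k}, \instrargs{k+1}))$, and since $\transrel(\memstateargs{k}, \instrargs{k+1}) = \memstateargs{k+1}$ by construction of the path, this is exactly $\specfunc(\memstateargs{k}, \instrargs{k+1}, \memstateargs{k+1})$. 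Lemma~\ref{lem:preserving:sep:seq:qed:cons} then delivers $\qedcons(\memstateargs{2n})$, so the QED test succeeds; since the test and the QED-consistent initial state were arbitrary, $\transsys$ is QED-consistent, completing the contrapositive.

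I expect the only real subtlety to be the bookkeeping around reachability of the intermediate states $\memstateargs{k}$: one must note that correctness is a statement about \emph{every} reachable state (not only initial states), and that every prefix of the execution ends in a reachable state, so correctness applies at each individual transition along the path. It is also worth noting that the QED-test structure of $\instrseq$ plays no role in this direction of the argument---it enters only through Lemma~\ref{lem:preserving:sep:seq:qed:cons}. Once the reachability observation is in place, the heavy lifting has already been done by that lemma, and the remainder is a routine instantiation.
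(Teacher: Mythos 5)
Your proposal is correct and takes essentially the same route as the paper's proof: both hinge on Lemma~\ref{lem:preserving:sep:seq:qed:cons}, discharging its antecedent via correctness applied at each reachable intermediate state, with your contrapositive formulation being logically equivalent to the paper's proof by contradiction. In fact, you spell out the reachability-of-prefix-states bookkeeping that the paper's terse proof leaves implicit, which is exactly the right detail to make explicit.
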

\begin{proof}
Let $\instrseq$ be a failing QED test for $\transsys$ and 
assume
that processor $\transsys$ is correct with respect to some abstract
specification relation $\specfunc$.
By Lemma~\ref{lem:preserving:sep:seq:qed:cons}, we 
conclude $\qedcons(\memstateargs{2n})$, which contradicts the
assumption that $\instrseq$ is a failing QED \nolinebreak test.
\end{proof}

Importantly, Lemma~\ref{lem:bug:fail:qedtest:canonical:new:onlyif}
holds regardless of what the actual specification relation $\specfunc$
is, i.e., it is independent of $\specfunc$ and the opcode specification
function $\specfuncinstrname{\opcode}$ (Definition~\ref{def:spec}).

Lemma~\ref{lem:bug:fail:qedtest:canonical:new:onlyif} shows that SQED
is a \emph{sound} technique: any error reported by a failing QED test is in
fact a real bug in the system.  It is more challenging to determine the
degree to which SQED is \emph{complete}, that is, for which bugs do there exist
failing QED tests?  We address this question next.

Suppose that $\bug=\langle \instr_b, \triggerstatesnoarg \rangle$ is a bug
with respect to a specification $\specfunc$ in a
processor $\transsys$, where $\instr_b = (\opcode_b, \memloc^b_{out},
(\memloc^b_{in1},\memloc^b_{in2}))$.  A \emph{bug-specific QED test} for $\bug$ is a QED test that
sets up the conditions for and includes the activation of the bug.  
By 
Definition~\ref{def:bug:pair}, if $\instr_b$ is executed in
$\transsys$ starting from any state in $\triggerstatesnoarg$, the
specification is violated.  That is, for each $\memstate_b \in
\triggerstatesnoarg$, 
$\neg \specfunc(\memstate_b, \instr_b, \transrel(\memstate_b, \instr_b))$.
Let $\memstate=\transrel(\memstate_b, \instr_b)$. 
According to~\eqref{assume:trans:spec}, there are two ways the specification can be violated.  Either:
(A) the value in the output location of $\instr_b$ is different from that
required by $\specfunc$, i.e.: $\memstate(\memloc^b_{out}) \not =
\specfuncinstr{\opcode_b}{\memstate_b(\memloc^b_{in1}),\memstate_b(\memloc^b_{in2})}$, 
which we call a \emph{type-A bug};
or (B) the value in some other, non-output location $\memloc_{\mathit{bad}}$ is not preserved, i.e.:
$\memstate(\memloc_{\mathit{bad}}) \not= \memstate_b(\memloc_{\mathit{bad}})$ for some $\memloc_{\mathit{bad}} \not=
\memloc^b_{out}$, which we call a \emph{type-B bug}.
We now define a bug-specific QED test formally. 

\begin{definition}[Bug-Specific QED Test] \label{def:bug:specific:qed:test}
Let $\bug = \langle \instr_b, \triggerstatesnoarg \rangle$ be a bug in
$\transsys$ with respect to $\specfunc$, where $\instr_b =
(\opcode_b, \memloc^b_{out}, (\memloc^b_{in1},\memloc^b_{in2}))$.
The instruction sequence $\instrseq = \langle
\instrargs{1}, \ldots, \instrargs{n}, \instrargs{n+1}, \ldots, \instrargs{2n}\rangle$
is a \emph{bug-specific QED test} for $\bug$ if the following conditions
hold:
\begin{enumerate}

\item \label{define:qed:second:bug} $\instrargs{n+1} = \instr_b$.

\item \label{define:qed:fourth:dup} $\instrseq$ is a QED test for some
  $\funcmemlocscorr$, i.e. for $1 \le k \le n$, $\instrargs{n+k} =
  \funcdupsingle(\instrargs{k})$.  In particular, $\instrargs{1} = (\opcode_b, \memloc_{out},
  (\memloc_{in1},\memloc_{in2}))$, with $(\memloc_{in1},\memloc_{in2},\memloc_{out}) = \funcmemlocscorr^{-1}(
  (\memloc^b_{in1},\memloc^b_{in2},\memloc^b_{out}))$.

\item \label{define:qed:third:path} There exists a path $\statepath \in \memstateset^{2n}$ from
  $\memstateargs{0} \in \meminitstateset$ with $\qedcons(\memstateargs{0})$, such that
  $\statepath =
  \transrelmulti(\memstateargs{0}, \instrseq) = \langle
  \memstateargs{1}, \ldots, \memstateargs{n}, \memstateargs{n+1}, \ldots,
  \memstateargs{2n}\rangle$, where $\memstateargs{n} \in \triggerstatesnoarg$.

\item $\specfunc(s_0,\instr_1,s_1)$.

\item \label{define:qed:fourth:cases} Additionally, we need three more conditions that depend on the bug types: 
  \begin{itemize}
    \item[Case A:] If $\instr_b$ is a type-A bug with respect to $\memstateargs{n}$, i.e.
$\memstateargs{n+1}(\memloc^b_{out}) \not =
  \specfuncinstr{\opcode_b}{\memstateargs{n}(\memloc^b_{in1}),\memstateargs{n}(\memloc^b_{in2})}$,
  then let $\memlocinv = \memloc_{out}$ and $\memlocinvy = \memloc^b_{out}$.
\item We then require:
  \begin{itemize}
  \item $\memstateargs{n+1}(\memlocinvy) = \memstateargs{2n}(\memlocinvy)$,
  \item $\memstateargs{1}(\memlocinv) = \memstateargs{2n}(\memlocinv)$,
  \item $\memstateargs{0}(\funcmemlocsin(\instr_b)) =
    \memstateargs{n}(\funcmemlocsin(\instr_b))$.
  \end{itemize}

\item[Case B:] If $\instr_b$ is a type-B bug with respect to $\memstateargs{n}$, i.e.
$\memstateargs{n}(\memloc_{\mathit{bad}}) \not = \memstateargs{n+1}(\memloc_{\mathit{bad}})$ for
  some $\memloc_{\mathit{bad}}\not=\memloc^b_{out}$, then let $\memlocinv =
  \funcmemlocscorr^{-1}(\memloc_{\mathit{bad}})$ with $\memlocinv \not = \memloc_{out}$ and $\memlocinvy = \memloc_{\mathit{bad}}$.
\item We then require:
  \begin{itemize}

  \item $\memstateargs{n+1}(\memlocinvy) = \memstateargs{2n}(\memlocinvy)$,
    
  \item $\memstateargs{1}(\memlocinv) = \memstateargs{2n}(\memlocinv)$.

  \item $\memstateargs{1}(\memlocinvy) = \memstateargs{n}(\memlocinvy)$,

  \end{itemize}
\end{itemize}

\end{enumerate}
\end{definition}

Clearly, it is always possible to satisfy the first two conditions by
declaring the buggy instruction $\instr_b$ to be the duplicate of
$\instrargs{1}$ with respect to some function
$\funcmemlocscorr$.  Moreover,
if we restrict our attention to single-instruction correct processors, then the
fourth condition always holds as well.  This fits in well with the stated intended
role of SQED which is to find sequence-dependent bugs, rather than
single-instruction bugs.

Understanding when the remaining conditions~\ref{define:qed:third:path} and~\ref{define:qed:fourth:cases} hold is more complicated.
We must find some instruction sequence $\instrseq^*=\langle
\instrargs{2}\ldots\instrargs{n}\rangle$ that can transition $\transsys$ 
from the state $\memstateargs{1}$ following the execution of $\instr_1$ to one of the
bug-triggering states in $\triggerstatesnoarg$, i.e., $\memstateargs{n}$.  Often it is
reasonable to assume that $\transsys$ is \emph{strongly connected}, i.e.,
that there always exists an instruction sequence that can transition from
one reachable state to another.
This is almost enough to ensure the existence
of $\instrseq^*$.  However, there are a few other restrictions on $\instrseq^*$ 
to satisfy Definition~\ref{def:bug:specific:qed:test}.

First, $\instrseq^*$ must consist of only original instructions to satisfy the definition of a QED test.  We are free to choose
$\funcmemlocscorr$ to be anything that works, so the main restriction is that
$\instrseq^*$ cannot use any instructions referencing locations that are used by $\instr_b$, i.e., 
$\memloc^b_{in1}$, $\memloc^b_{in2}$, or $\memloc^b_{out}$. Note that
we defined $\instrargs{n+1} = \instr_b$ to be the first duplicate instruction. 
This ends up being the most severe restriction on $\instrseq^*$ because it
means that instructions in $\instrseq^*$ cannot write to the locations used as
inputs by $\instr_b$.  We discuss some mitigations to this restriction in Section~\ref{sec:extensions}.

Somewhat surprisingly, the three requirements in condition~\ref{define:qed:fourth:cases}
are not very severe, as we now explain.
For both type-A and type-B bugs, locations $\memlocinv$ and $\memlocinvy$ are an original location and
its duplicate, respectively, that will hold inconsistent values when the QED test $\instrseq$ fails.
For type-A bugs, $\memlocinv$ holds the
correct output value of $\instrargs{1}$ and $\memlocinvy$ holds the
incorrect output value of $\instr_b$.  For type-B bugs, $\memlocinvy$
holds the value of location $\memloc_{\mathit{bad}}$ that is
incorrectly modified when $\instr_b$ is executed in state
$\memstateargs{n}$, and $\memlocinv$ is the original
location that corresponds to $\memlocinvy = \memloc_{\mathit{bad}}$. 

The first requirement $\memstateargs{n+1}(\memlocinvy) =
\memstateargs{2n}(\memlocinvy)$ means that the duplicate
sequence $\funcdupseq(\instrseq^*)$ of $\instrseq^*$ in the QED test
has to preserve the value of $\memlocinvy$ in $\memstateargs{n+1}$
also in the final state $\memstateargs{2n}$. Further, since
$\memlocinv = \funcmemlocscorr^{-1}(\memlocinvy)$, this also imposes
restrictions on the modifications that $\instrseq^*$ can make to
$\memlocinv$. However, as this is just one original location, it is 
unlikely that every possible $\instrseq^*$ 
would need to modify it to get to some bug-triggering state \nolinebreak $\memstateargs{n}$.

The second requirement is $\memstateargs{1}(\memlocinv) =
\memstateargs{2n}(\memlocinv)$.  For similar reasons, it is unlikely that $\instrseq^*$ would need to modify
$\memlocinv$, and the duplicate sequence $\funcdupseq(\instrseq^*)$ of
$\instrseq^*$ should not modify it either, since it is an original
location and original locations should be left alone by duplicate
instructions. Although the buggy instruction $\instr_b$ might modify
$\memlocinv$ if it has more than one bug effect, we may be able to
choose the locations of $\instr_1$ and $\funcmemlocscorr$ differently
to avoid this.

Finally, the last requirement of
condition~\ref{define:qed:fourth:cases} depends on the
two cases A and B. In both cases, we require that $\instrseq^*$ does
not modify certain duplicate locations: the input locations
$\funcmemlocsin(\instr_b)$ of $\instr_b$ (A) and 
location $\memlocinvy$ that is incorrectly modified by 
$\instr_b$ (B).  Sequence $\instrseq^*$ should not modify any duplicate
locations as it is composed of original instructions.
Note that we do not have to make the
strong assumption that $\instrseq^*$ 
executes according to its specification, only that it avoids corrupting
a few key locations. Given that we have a lot of freedom in choosing
$\funcmemlocscorr$ and hence the locations of $\instr_1$, these requirements are likely to be
satisfiable if there are some degrees of freedom in choosing a path to one
of the bug-triggering states.

We now prove our conditional completeness property, namely that if a
bug-specific QED test $\instrseq$ exists, then $\instrseq$ fails.

\begin{lemma} \label{lem:bug:fail:qedtest:canonical:new:if}
Let $\transsys$ be a processor with a bug $\bug = \langle \instr_b, \triggerstatesnoarg
\rangle$ with respect to specification $\specfunc$, for which there
exists a bug-specific QED test $\instrseq$.  Then $\instrseq$ fails.
\end{lemma}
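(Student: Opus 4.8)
The plan is to show that the final state $\memstateargs{2n}$ is not QED-consistent by exhibiting a single original location $\memlocinv$ whose value differs from that of its corresponding duplicate $\memlocinvy = \funcmemlocscorr(\memlocinv)$. Since QED-consistency requires agreement at \emph{every} original location, a single witnessed disagreement suffices to conclude $\neg\qedcons(\memstateargs{2n})$, i.e. that $\instrseq$ fails. So the whole proof reduces to computing the values $\memstateargs{2n}(\memlocinv)$ and $\memstateargs{2n}(\memlocinvy)$ and establishing the inequality.

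First I would handle Case A. Here $\memlocinv = \memloc_{out}$ is the output location of $\instrargs{1}$ and $\memlocinvy = \memloc^b_{out}$ is the output location of the buggy instruction $\instr_b = \instrargs{n+1}$. I want to chase $\memstateargs{2n}(\memlocinvy)$ back to $\memstateargs{n+1}(\memlocinvy)$ (the corrupted value) using the first requirement, and chase $\memstateargs{2n}(\memlocinv)$ back to $\memstateargs{1}(\memlocinv)$ (the correct output of $\instrargs{1}$) using the second requirement. Concretely: by condition~4 we have $\specfunc(\memstateargs{0},\instrargs{1},\memstateargs{1})$, so $\memstateargs{1}(\memlocinv) = \specfuncinstr{\opcode_b}{\memstateargs{0}(\funcmemlocsin(\instrargs{1}))}$, and since $\instrargs{1}$ and $\instr_b$ share the opcode $\opcode_b$, the third bullet in Case~A ($\memstateargs{0}(\funcmemlocsin(\instr_b)) = \memstateargs{n}(\funcmemlocsin(\instr_b))$) together with the bijection $\funcmemlocscorr$ relating the inputs lets me rewrite this specified value in terms of $\memstateargs{n}$'s inputs, matching exactly the value $\specfuncinstr{\opcode_b}{\memstateargs{n}(\memloc^b_{in1}),\memstateargs{n}(\memloc^b_{in2})}$ that the type-A hypothesis says $\memstateargs{n+1}(\memlocinvy)$ \emph{differs} from. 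Combining the two preservation requirements then gives $\memstateargs{2n}(\memlocinv) = \memstateargs{1}(\memlocinv) = \specfuncinstr{\opcode_b}{\ldots} \not= \memstateargs{n+1}(\memlocinvy) = \memstateargs{2n}(\memlocinvy)$, which is the desired disagreement.

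Next I would handle Case B analogously. Now $\memlocinvy = \memloc_{\mathit{bad}}$ is a non-output duplicate location that $\instr_b$ wrongly overwrites, and $\memlocinv = \funcmemlocscorr^{-1}(\memloc_{\mathit{bad}})$ is its original counterpart. The two preservation requirements again pin $\memstateargs{2n}(\memlocinvy) = \memstateargs{n+1}(\memlocinvy)$ and $\memstateargs{2n}(\memlocinv) = \memstateargs{1}(\memlocinv)$. The extra third bullet $\memstateargs{1}(\memlocinvy) = \memstateargs{n}(\memlocinvy)$ bridges $\memstateargs{1}$ and $\memstateargs{n}$ on the duplicate side, and since $\memstateargs{0}$ is QED-consistent and $\instrargs{1}$ executes correctly (condition~4), I can argue $\memstateargs{1}(\memlocinv) = \memstateargs{1}(\memlocinvy)$ — the original and duplicate values agree after $\instrargs{1}$ because $\instrargs{1}$ only touches original locations and the duplicate side was consistent initially. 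Then $\memstateargs{2n}(\memlocinv) = \memstateargs{1}(\memlocinv) = \memstateargs{1}(\memlocinvy) = \memstateargs{n}(\memlocinvy) \not= \memstateargs{n+1}(\memlocinvy) = \memstateargs{2n}(\memlocinvy)$, where the middle inequality is precisely the type-B hypothesis.

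The main obstacle I anticipate is the bookkeeping in Case~A that connects the \emph{specified} output value of $\instrargs{1}$ (computed from $\memstateargs{0}$'s inputs) with the specified value the buggy instruction fails to produce (computed from $\memstateargs{n}$'s inputs): this requires carefully invoking the functional-congruence flavor of the specification through the opcode equality and the input-matching bullet, rather than a direct equality of states. I would lean on Corollary~\ref{def:dup:specvalue} or equation~\eqref{def:func:congr:instr} to make this rigorous. A secondary subtlety is making sure in each case that $\memlocinv$ is genuinely an original location and $\memlocinvy$ its duplicate (so that disagreement at this pair really violates Definition~\ref{def:qed:consistency}); this is guaranteed by the definition of $\memlocinv,\memlocinvy$ in condition~\ref{define:qed:fourth:cases}, but it is worth stating explicitly before concluding $\neg\qedcons(\memstateargs{2n})$.
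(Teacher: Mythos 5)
Your proposal is correct and takes essentially the same route as the paper's proof: exhibit the witness pair $(\memlocinv,\memlocinvy)$, use the preservation requirements of condition~5 to carry their values to $\memstateargs{2n}$, and invoke the type-A/type-B hypothesis to conclude $\memstateargs{2n}(\memlocinv)\neq\memstateargs{2n}(\memlocinvy)$, hence $\neg\qedcons(\memstateargs{2n})$. The one step you leave implicit in Case~A --- that it is QED-consistency of $\memstateargs{0}$, not the bijection $\funcmemlocscorr$ alone, that yields the value equality $\memstateargs{0}(\funcmemlocsin(\instrargs{1}))=\memstateargs{0}(\funcmemlocsin(\instr_b))$ --- is exactly how the paper opens that case, so the two arguments coincide.
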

\begin{proof}
Let $\bug = \langle \instr_b, \triggerstatesnoarg \rangle$ be a bug and
$\instrseq$ be a bug-specific QED test for $\bug$. 
By Definition~\ref{def:bug:specific:qed:test} we have $\instrseq = \langle
\instrargs{1}, \ldots, \instrargs{n}, \instrargs{n+1}, \ldots, \instrargs{2n}\rangle$ and $\statepath =
  \transrelmulti(\memstateargs{0}, \instrseq) = \langle
  \memstateargs{0}, \memstateargs{1}, \ldots, \memstateargs{n}, \memstateargs{n+1}, \ldots,
  \memstateargs{2n}\rangle$, where $\memstateargs{n} \in \triggerstatesnoarg$
  and $\instr_b = \instrargs{n+1}$,
  and $\qedcons(\memstateargs{0})$ holds. We show that $\neg
  \qedcons(\memstateargs{2n})$ holds by showing that $\memstateargs{2n}(\memlocinv)
    \not = \memstateargs{2n}(\memlocinvy)$. We distinguish the two cases A and B in
  Definition~\ref{def:bug:specific:qed:test}.

  \noindent \textbf{Case A.}     
    Since $\qedcons(\memstateargs{0})$ and
    $\funcdupsingle(\instrargs{1}) = \instr_b$, we have
    \begin{equation}
      \memstateargs{0}(\funcmemlocsin(\instrargs{1})) =
      \memstateargs{0}(\funcmemlocsin(\instr_b)) \label{case:A:line0}
    \end{equation} 
    From the third requirement of Case A in Definition~\ref{def:bug:specific:qed:test}, we have
 $\memstateargs{0}(\funcmemlocsin(\instr_b)) =
    \memstateargs{n}(\funcmemlocsin(\instr_b))$, so it follows that,
    \begin{equation}
    \memstateargs{0}(\funcmemlocsin(\instrargs{1})) =
    \memstateargs{n}(\funcmemlocsin(\instr_b)) \label{case:A:line1}
    \end{equation}
    By \eqref{case:A:line1} and since $\funcopcode(\instrargs{1}) = \funcopcode(\instr_b)$, also
    \begin{equation}
    \specfuncinstr{\funcopcode(\instrargs{1})}{\memstateargs{0}(\funcmemlocsin(\instrargs{1}))}
    =
    \specfuncinstr{\funcopcode(\instr_b)}{\memstateargs{n}(\funcmemlocsin(\instr_b))} \label{case:A:line2}
    \end{equation} 
    Since $\specfunc(s_0,\instr_1,s_1)$ by Definition~\ref{def:bug:specific:qed:test}, we have
    \begin{equation}
    \memstateargs{1}(\funcmemlocsout(\instrargs{1})) =
    \specfuncinstr{\funcopcode(\instrargs{1})}{\memstateargs{0}(\funcmemlocsin(\instrargs{1}))} \label{case:A:line3}
    \end{equation} 
    Since we are in Case A, we have from
    Definition~\ref{def:bug:specific:qed:test} that
    $\memlocinv = \funcmemlocsout(\instrargs{1})$, and from the second
    requirement of Case A, we have $\memstateargs{1}(\memlocinv) =
      \memstateargs{2n}(\memlocinv)$, so it follows that,
    \begin{equation}
    \memstateargs{2n}(\memlocinv) =
    \specfuncinstr{\funcopcode(\instrargs{1})}{\memstateargs{0}(\funcmemlocsin(\instrargs{1}))} \label{case:A:line5}
    \end{equation} 
    Since $\instr_b$ fails in state
    $\memstateargs{n}$, we have that,
    \begin{equation}
    \memstateargs{n+1}(\funcmemlocsout(\instr_b)) \not =
    \specfuncinstr{\funcopcode(\instr_b)}{\memstateargs{n}(\funcmemlocsin(\instr_b))} \label{case:A:line6}
    \end{equation} 
    Again, from Case A in
    Definition~\ref{def:bug:specific:qed:test}, we have
    $\memlocinvy = \funcmemlocsout(\instr_b)$,
    and from the first requirement of Case A, we
    have $\memstateargs{n+1}(\memlocinvy) =
    \memstateargs{2n}(\memlocinvy)$, so it follows that,
    \begin{equation}
    \memstateargs{2n}(\memlocinvy) \not =
    \specfuncinstr{\funcopcode(\instr_b)}{\memstateargs{n}(\funcmemlocsin(\instr_b))} \label{case:A:line7}
    \end{equation} 
    Finally, \eqref{case:A:line2} and \eqref{case:A:line5} give us,
    \begin{equation}
      \memstateargs{2n}(\memlocinv) = \specfuncinstr{\funcopcode(\instr_b)}{\memstateargs{n}(\funcmemlocsin(\instr_b))} \label{case:A:line8}
    \end{equation} 
     But then \eqref{case:A:line7} and \eqref{case:A:line8} imply  $\memstateargs{2n}(\memlocinv)
    \not = \memstateargs{2n}(\memlocinvy)$, and hence $\neg \qedcons(\memstateargs{2n})$.  

  \noindent \textbf{Case B.} See appendix.     
    \qedhere
\end{proof}

\begin{theorem} \label{thm:main}
  \  

  \begin{itemize}
\item SQED is sound (Lemma~\ref{lem:bug:fail:qedtest:canonical:new:onlyif}).
\item SQED is complete for bugs for which a bug-specific QED test exists
(Lemma~\ref{lem:bug:fail:qedtest:canonical:new:if}).
\end{itemize}
\end{theorem}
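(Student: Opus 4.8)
The plan is to observe that Theorem~\ref{thm:main} is a \emph{summary} theorem: both of its bullet points are restatements of lemmas that have already been established earlier in the excerpt. So the proof is essentially a bookkeeping exercise that stitches the two lemmas to the two claimed properties, rather than any new calculation. I would first fix terminology precisely: ``sound'' means every failing QED test corresponds to a genuine bug (equivalently, if $\transsys$ is QED-inconsistent then $\transsys$ is not correct with respect to any specification), and ``complete for a class of bugs'' means that for every bug in that class there is a QED test that fails.

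For the soundness bullet, I would simply invoke Lemma~\ref{lem:bug:fail:qedtest:canonical:new:onlyif}: its statement is exactly that QED-inconsistency of $\transsys$ implies non-correctness with respect to every abstract specification relation $\specfunc$. Contrapositively, this says that if SQED finds a failing QED test (a counterexample to self-consistency), then there is no specification under which $\transsys$ is correct, i.e.\ $\transsys$ has a real bug (by Proposition~\ref{prop:bug:correctness:new}). The key point worth emphasizing here, already noted after the lemma, is the \emph{independence from $\specfunc$}: the conclusion holds no matter what the underlying opcode specifications $\specfuncinstrname{\opcode}$ are, which is what makes SQED implementation-independent. So this bullet is discharged directly by the cited lemma with no additional work.

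For the completeness bullet, I would invoke Lemma~\ref{lem:bug:fail:qedtest:canonical:new:if}: for any bug $\bug = \langle \instr_b, \triggerstatesnoarg \rangle$ for which a bug-specific QED test $\instrseq$ exists, that lemma shows $\instrseq$ fails. Thus the existence of a bug-specific QED test certifies that SQED (which exhaustively enumerates QED tests up to the BMC bound) will uncover a failing test witnessing the bug. I would be careful to phrase this as \emph{conditional} completeness, since the guarantee is only for the class of bugs admitting a bug-specific QED test; the discussion preceding Lemma~\ref{lem:bug:fail:qedtest:canonical:new:if} argues that conditions~\ref{define:qed:third:path} and~\ref{define:qed:fourth:cases} are mild (e.g.\ under strong connectivity and the freedom in choosing $\funcmemlocscorr$), but formally the theorem only asserts completeness relative to that class.

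The main ``obstacle'' here is not mathematical but expository: since the theorem is a restatement, the only real risk is a mismatch between the informal words ``sound''/``complete'' and the precise lemma statements. I would therefore make explicit the translation between the self-consistency vocabulary (failing/succeeding QED tests, QED-inconsistency) and the correctness vocabulary (bugs, specification violations), leaning on Definitions~\ref{def:proc:qed:incons} and~\ref{def:bug:pair} and Proposition~\ref{prop:bug:correctness:new} to bridge them. No genuinely hard step remains once those identifications are in place, so the proof is essentially a one-line citation of each lemma together with the observation that soundness holds independently of $\specfunc$ while completeness is conditioned on the existence of a bug-specific QED test.
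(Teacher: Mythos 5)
Your proposal is correct and matches the paper exactly: Theorem~\ref{thm:main} is indeed a summary statement whose two bullets are discharged by direct citation of Lemma~\ref{lem:bug:fail:qedtest:canonical:new:onlyif} and Lemma~\ref{lem:bug:fail:qedtest:canonical:new:if}, respectively, which is why the paper gives no separate proof. Your added remarks on specification-independence and the conditional nature of completeness faithfully reflect the paper's surrounding discussion.
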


Theorem~\ref{thm:main} is relevant for practical applications of SQED.
Referring to the high-level workflow shown in
Fig.~\ref{fig:sqed:overview},
BMC symbolically explores all possible QED tests up to bound $n$ for a
particular fixed mapping $\funcmemlocscorr$. If a failing QED test
$\instrseq$ is found, then by the soundness of SQED, $\instrseq$
corresponds to a bug in the processor. By completeness, if there
exists a bug for which a bug-specific QED test $\instrseq$ exists,
then with a sufficiently large bound $n$, BMC will find a sequence $\instrseq$
that will fail.

\subsection{Extensions}
\label{sec:extensions}

We now consider variants of QED tests that cover a larger class of
bugs (i.e. bugs that cannot be detected by a bug-specific QED
test). Ultimately, with hardware support we obtain a family
of QED tests which, together with single-instruction correctness,
results in a complete variant of SQED (Theorem~\ref{thm:hard:reset}).

The main limitation of bug-specific QED
tests arises from
the fact that QED tests consist of a sequence of original instructions
followed by duplicate ones. This makes it impossible to set up a
bug-specific QED test for an important class of forwarding-logic bugs
(a simple refinement of our model can be used for the important case of
pipelined systems).  To see why, consider that a
bug-triggering state $\memstateargs{n} \in \triggerstatesnoarg$ must
be reached by executing a sequence of original instructions. The
buggy instruction, which is a \emph{duplicate}, is executed in state
$\memstateargs{n}$ and would have to read a value from some
\emph{original} location written previously.

To resolve this limitation, first note that there is another way that SQED can find bugs, namely by finding
QED tests for which the bug occurs during the original sequence, but not during the duplicate one.
This kind of QED test is much more effective with a simple extension
to allow no-operation instructions (a trick also employed
in~\cite{DBLP:conf/fmcad/JonesSD96}).
To formalize this, we first define a set
$\mathcal{N}$ of no-operation instructions (NOPs).

\begin{definition}
Let $\mathcal{N}$ be the set of instructions such that, for every state
$(\memstatearch,\memstatenarch)$, if $\instrnop \in \mathcal{N}$, then
$\transrel((\memstatearch,\memstatenarch), \instrnop) = (\memstatearch,\memstatenarch')$ for
some $\memstatenarch' \in \memstatesetnarch$.
\end{definition}

\noindent
An instruction in $\mathcal{N}$ may change the non-architectural part of a state, but
not the architectural part.

\begin{definition}
An \emph{extended QED test} is any sequence of instructions obtained from a standard
QED test by inserting zero or more instructions from $\mathcal{N}$ anywhere in
the sequence.
\end{definition}

\noindent
Extended QED tests enjoy the same properties as standard
QED tests.  In particular, an appropriately lifted version of Lemma~\ref{lem:preserving:sep:seq:qed:cons} holds and
the notions of failing and succeeding QED tests can be lifted to extended QED
tests in the obvious way.

\begin{definition}[Bug-Hunting Extended QED Test] \label{def:bug:specific:extended:qed:test}
Let $\transsys$ be a single-instruction correct processor with at least one
bug.  The instruction sequence $\instrseq$ is a \emph{bug-hunting extended QED test}
with a \emph{bug-prefix of size $k$} and initial state $\memstateargs{0}$ for $\transsys$ if the following conditions hold:
\begin{enumerate}
  \item There is some bug $\bug = \langle \instr_b, \triggerstatesnoarg \rangle$ in
    $\transsys$ such that $\transrelmulti(\memstateargs{0}, \langle \instrargs{1},\ldots,\instrargs{k-1}\rangle)
    \in \triggerstatesnoarg$ and $\instrargs{k} = \instr_b$
  \item $\instrseq$ is an extended QED test
  \item $\instrargs{k}$ is an original instruction, and $\instrargs{k+1} = \funcdupsingle(\instrargs{1})$
\end{enumerate}
\end{definition}

\noindent
Unlike a bug-specific QED test, a bug-hunting extended QED test
is not guaranteed to fail.  It starts with a bug-triggering
sequence of length $k$, and then finishes with a modified duplicate sequence which
may add (or subtract) NOPs from $\mathcal{N}$.  The
NOPs can be used to change the timing between any interdependent instructions,
making it more likely that the duplicate sequence will produce a correct
result, especially if the bug depends on forwarding-logic. 
One can show (omitted for lack of space) that for a general class of forwarding-logic
bugs, there does always exist an extended QED test that fails.

Another QED test extension is to allow original and duplicate instructions to be
\emph{interleaved}~\cite{DBLP:conf/iccad/LonsingGMNSSYMB19}, rather than requiring that all original instructions
precede all duplicate instructions~\cite{8355908}.\footnote{
The bug in Example~\ref{ex:bug:detection} can be detected by executing
the QED test $\instrseq = \instrorigargs{1},\instrdupargs{1} ::
\instrorigargs{2},\instrdupargs{2}$, which interleaves original and
duplicate instructions. The subsequence
$\instrorigargs{2},\instrdupargs{2}$ of two back-to-back MULs causes
$\instrdupargs{2}$ to produce an incorrect result at its output
location $\memloc_{31}$.  The final state is QED-inconsistent since
the output location $\memloc_{15}$ of $\instrorigargs{2}$ holds the
correct value, while $\memloc_{31}$ holds an incorrect one.}
Again, it is straightforward to show that
this extension preserves Lemma~\ref{lem:preserving:sep:seq:qed:cons}.
Clearly, the set of bugs that can be found by adding interleaving are
a strict superset of those that can be found without.
In practice, implementations of SQED search for all possible extended QED
tests with interleaving.  Empirically, case studies have not turned up any
(non-single-instruction) bugs that cannot be found with this combination.
However, one can construct pathological systems with bugs that cannot be found
by such QED tests. We address these cases next.

\subsection{Hardware Extensions}
With hardware support, stronger guarantees can be achieved
that lead to our final completeness result
(Theorem~\ref{thm:hard:reset}).  We first introduce a \emph{soft-reset}
instruction, which transitions the non-architectural part of a state to the
initial non-architectural state $\meminitstatenarch$ without changing the architectural
part. Then we define a variant of bug-hunting
extended QED tests where we insert soft-reset instructions in the
sequence of duplicate instructions. This way, all duplicate
instructions execute in an initial state and hence execute according
to the specification for single-instruction correct processors. The
resulting QED test always fails, in contrast to a bug-hunting extended
QED \nolinebreak test.

\begin{definition}
$\instr_r$  is a \emph{soft-reset instruction} for $\transsys$ if for every state
$(\memstatearch,\memstatenarch)$, $\transrel((\memstatearch,\memstatenarch), i_r) =
  (\memstatearch,\meminitstatenarch)$.
\end{definition}

\noindent
It is easy to see that $\instr_r \in \mathcal{N}$.

\begin{definition}[Bug-Specific Soft-Reset QED Test] \label{def:bug:specific:soft-reset-qed:test}
Let $\transsys$ be single-instruction correct with at least one
bug $\bug = \langle \instr_b, \triggerstatesnoarg \rangle$.
The instruction sequence $\instrseq = \langle \instrargs{1}, \ldots \instrargs{n}\rangle$
is a \emph{bug-specific soft-reset QED test} for $\transsys$ if the following conditions
hold:
\begin{enumerate}
  \item $\instrseq$ is a bug-hunting extended QED test for $\transsys$ with a minimal
    bug-prefix of size $k\ge 2$ and initial state $\memstateargs{0}$
  \item Let $\statepath=\transrelmulti(\memstateargs{0},\instrseq)$. Then,
    $\forall\,\memloc \in \funcmemlocsdup.\: \memstateargs{k-1}(\memloc) =
    \memstateargs{k}(\memloc)$, i.e., $\instr_b = \instrargs{k}$ does not corrupt any
    duplicate location
  \item $n = 3k$
  \item For each $1\le j \le k$, $\instr_{k+2j-1} = \instr_r$
\end{enumerate}
\end{definition}

\begin{lemma} \label{lem:bug:specific:soft-reset-qed:test}
  If $\transsys$ is single-instruction correct and has a bug-specific soft-reset QED test $\instrseq$, then
  $\instrseq$ fails.
\end{lemma}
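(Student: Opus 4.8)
The plan is to show that a bug-specific soft-reset QED test $\instrseq$ always produces a QED-inconsistent final state by carefully tracking how the architectural values at the original and duplicate locations evolve. The key structural insight I would exploit is condition~4: soft-reset instructions $\instr_r$ are inserted at positions $k+2j-1$ throughout the duplicate portion of the sequence, so that \emph{every} duplicate instruction is immediately preceded by a soft-reset. Since $\instr_r$ restores the non-architectural part to $\meminitstatenarch$ without touching the architectural part (and since the transition function is deterministic), each duplicate instruction therefore executes from a state whose non-architectural part is exactly $\meminitstatenarch$. Such states are initial states, so by single-instruction correctness each duplicate instruction executes according to the abstract specification $\specfunc$. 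This is the crucial leverage that a plain bug-hunting extended QED test lacks.

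First I would fix the bug $\bug = \langle \instr_b, \triggerstatesnoarg \rangle$ and use condition~1 to obtain the path $\statepath = \transrelmulti(\memstateargs{0},\instrseq)$ with $\qedcons(\memstateargs{0})$, $\instrargs{k} = \instr_b$, and $\memstateargs{k-1} \in \triggerstatesnoarg$. Because the bug-prefix $\langle \instrargs{1},\ldots,\instrargs{k-1}\rangle$ consists of original instructions only (it is the original half of the extended QED test), repeated application of~\eqref{cor:trans:spec:orig:instr} is \emph{not} available directly—but I do not need the originals to be correct. Instead I track the two key locations: the output location $\memloc^b_{out}$ of $\instr_b$ and its duplicate image under $\funcmemlocscorr$. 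Since $\instr_b = \instrargs{k}$ is an original instruction producing an output, and since by single-instruction correctness the buggy behavior manifests in the original half, I would establish that after executing $\instrargs{k}$ the value at $\memloc^b_{out}$ is the (correct or incorrect) value actually written, while condition~2 guarantees $\instr_b$ does not corrupt any duplicate location, so the duplicate locations are preserved across the prefix up to $\memstateargs{k}$.

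The heart of the argument is then a functional-congruence comparison. I would show that the corresponding \emph{duplicate} instruction, which under the QED-test structure is $\funcdupsingle(\instrargs{k})$ appearing later in the sequence, executes from a state whose input values match those $\instr_b$ read in $\memstateargs{k-1}$, \emph{because} the soft-resets make each duplicate initial and single-instruction correctness forces it to compute the specified output. Using Lemma~\ref{lem:func:congr} (or directly Corollary~\ref{def:dup:specvalue}) together with $\qedcons(\memstateargs{0})$, the duplicate instruction writes the \emph{correct} specified value at $\funcmemlocscorr(\memloc^b_{out})$. Meanwhile the original $\instr_b$ wrote an \emph{incorrect} value at $\memloc^b_{out}$, and condition~2 plus the preservation properties of intervening instructions ensure these two values survive unchanged to the final state $\memstateargs{n}$. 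Comparing them yields $\memstateargs{n}(\memloc^b_{out}) \not= \memstateargs{n}(\funcmemlocscorr(\memloc^b_{out}))$, hence $\neg\qedcons(\memstateargs{n})$, so $\instrseq$ fails.

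The main obstacle I anticipate is \textbf{bookkeeping the value at $\memloc^b_{out}$ and its duplicate across all the intervening instructions and soft-resets}—in particular, verifying that nothing between the buggy original write and the final state overwrites either location, and confirming that the soft-resets genuinely make every duplicate instruction start from a state in $\meminitstateset$ (this requires that inserting $\instr_r$ before each duplicate leaves the architectural part intact while forcing the non-architectural part to $\meminitstatenarch$, matching Definition~\ref{def:reachable:state} for initial states). Establishing that the duplicate instructions are genuinely executed from initial states—and therefore correctly by single-instruction correctness—is where condition~4 and the index arithmetic $n = 3k$ must be checked carefully; getting the indices of the interleaved resets and duplicates exactly right is the delicate part, but it is routine once the pattern $\instr_{k+2j-1} = \instr_r$ is unwound.
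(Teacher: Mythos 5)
Your high-level strategy is indeed the paper's: the interleaved soft resets force every duplicate instruction to execute from a state in $\meminitstateset$, single-instruction correctness then makes each duplicate execute according to $\specfunc$, and functional congruence ties the duplicate half back to the original prefix. However, there is a genuine gap at the point where you explicitly claim you ``do not need the originals to be correct.'' That claim is false, and it undermines two steps you depend on. Condition 2 of Definition~\ref{def:bug:specific:soft-reset-qed:test} constrains only $\instr_b = \instrargs{k}$, not $\instrargs{1},\ldots,\instrargs{k-1}$; if any of those earlier instructions executed incorrectly, it could itself corrupt duplicate locations, so your assertion that ``the duplicate locations are preserved across the prefix up to $\memstateargs{k}$'' does not follow. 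More fundamentally, Lemma~\ref{lem:func:congr} and Corollary~\ref{def:dup:specvalue} require \emph{both} the original instruction and its duplicate to satisfy $\specfunc$; with possibly-incorrect originals you cannot conclude that $\funcdupsingle(\instr_b)$ reads in $\memstateargs{3k-1}$ the same input values that $\instr_b$ read in $\memstateargs{k-1}$, which is exactly the matching-inputs hypothesis your congruence step needs. The missing ingredient is the \emph{minimality} of the bug-prefix $k$ required by condition 1 of the definition: since no proper prefix triggers a bug, every $\instrargs{j}$ with $j < k$ executes according to $\specfunc$ in its (reachable) state. This is the very first sentence of the paper's proof, and without it the value-tracking collapses.

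A second gap is that your concluding comparison is made at $\memloc^b_{out}$ versus $\funcmemlocscorr(\memloc^b_{out})$, under the assumption that $\instr_b$ ``wrote an incorrect value at $\memloc^b_{out}$.'' That covers only type-A bugs. For a type-B bug the value written at $\memloc^b_{out}$ may be perfectly correct, and the specification violation is the corruption of some other location $\memloc_{\mathit{bad}} \neq \memloc^b_{out}$ (which by condition 2 must be an \emph{original} location); in that case your comparison at the output location shows nothing. The paper instead fixes an arbitrary original location $\memloc$ whose value is incorrect after $\instrargs{k}$ executes, and shows $\memstateargs{k}(\memloc) \neq \memstateargs{3k}(\funcmemlocscorr(\memloc))$ together with $\memstateargs{k}(\memloc) = \memstateargs{3k}(\memloc)$, treating the type-B case explicitly and noting that the type-A case is analogous. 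Your proof needs this case split (or a uniform choice of corrupted original location) to be complete.
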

\begin{proof}
See appendix. \qedhere
\end{proof}

There are still a few (pathological) ways in which a bug may be missed by searching for
all possible soft-reset QED tests.  First, there may be no triggering sequence
starting from any QED-consistent state.  Second, it could be that the
triggering sequence for a bug requires using more than half of all the
locations, making it impossible to divide the locations among original and
duplicate instructions.  Finally, it could be that the bug always corrupts
duplicate locations for every possible candidate sequence.  These can all be
remedied by adding \emph{hard reset} instructions, which 
reset $\transsys$ to a specific initial state.

\begin{definition}
The set $\{\instr_{R,\meminitstate}|\meminitstate \in \meminitstateset\}$ is a
\emph{family of hard reset instructions} for $\transsys$ if for every state
$\memstate$, $\transrel(\memstate,\instr_{R,\meminitstate}) = \meminitstate$.
\end{definition}

\begin{definition}
Let $\transsys$ be a processor.
Then $\instr = \langle \instrargs{1}\ldots\instrargs{2k+2}\rangle$ is a
\emph{bug-specific hard-reset QED test} with bug-prefix size $k$ and initial state
$\meminitstate$ for $\transsys$ if the following conditions hold:
\begin{enumerate}
  \item $k \ge 2$
  \item $\langle \instr_1\ldots\instr_k \rangle$ reach and trigger a bug $\bug = \langle \instr_b, \triggerstatesnoarg \rangle$ in
    $\transsys$ starting from $\meminitstate$, where $\instr_k = \instr_b$ 
  \item $\instr_{k+1} = \instr_{R,\meminitstate}$
  \item $\langle \instrargs{k+2} \ldots \instrargs{2k} \rangle = \langle \instrargs{1} \ldots \instrargs{k-1}\rangle$
  \item $\instrargs{2k+1} = \instr_r$
  \item $\instrargs{2k+2} = \instrargs{k}$
\end{enumerate}
\end{definition}

\noindent
Notice that there is no notion of duplication for a hard-reset QED test.
Instead, the exact same sequence is executed twice except that there is a hard
reset in between and a soft reset right before the last instruction.
Hard-reset QED tests also use a slightly different notion of success and failure.

\begin{definition}
Let $\instrseq$ be a bug-specific hard-reset QED test with bug-prefix size $k$ and initial
state $\meminitstate$, and let $\statepath = \transrelmulti(\meminitstate, \instrseq)$.
\begin{itemize}
  \item $\instrseq$ succeeds if $\memstateargs{k}(l) = \memstateargs{2k+2}(l)$
    for every location $l \in \memlocs$.
  \item $\instrseq$ fails if $\memstateargs{k}(l) \not= \memstateargs{2k+2}(l)$
    for some location $l \in \memlocs$.
\end{itemize}
\end{definition}

\noindent
The combination of single-instruction correctness checking and exhaustive search
for hard-reset QED tests is complete.

\begin{theorem} \label{thm:hard:reset}
If $\transsys$ is single-instruction correct and has no failing
bug-specific hard-reset QED tests, then it is correct.
\end{theorem}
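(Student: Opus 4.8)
The plan is to prove the contrapositive: assuming $\transsys$ is single-instruction correct but \emph{not} correct with respect to $\specfunc$, I will exhibit a bug-specific hard-reset QED test that fails. By Proposition~\ref{prop:bug:correctness:new}, the failure of correctness gives a bug $\bug = \langle \instr_b, \triggerstatesnoarg \rangle$, and since $\triggerstatesnoarg$ is non-empty I may fix a reachable triggering state $\memstate_b \in \triggerstatesnoarg$, so that $\neg\specfunc(\memstate_b, \instr_b, \transrel(\memstate_b, \instr_b))$. By Definition~\ref{def:reachable:state} there is an initial state $\memstateargs{0} \in \meminitstateset$ and a sequence $\langle \instrargs{1}, \ldots, \instrargs{k-1}\rangle$ with $\transrelmulti(\memstateargs{0}, \langle \instrargs{1}, \ldots, \instrargs{k-1}\rangle) = \memstate_b$; I set $\instrargs{k} = \instr_b$.

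First I would verify the size requirement $k \ge 2$, which is exactly where single-instruction correctness enters. If $\memstate_b$ were an initial state, then $\instr_b$ would fail in an initial state, contradicting Definition~\ref{def:si:correctness}; hence $\memstate_b$ is non-initial, its prefix has length at least one, and $k \ge 2$. I then assemble the test $\instrseq = \langle \instrargs{1}, \ldots, \instrargs{2k+2}\rangle$ as prescribed by the definition of a bug-specific hard-reset QED test: the prefix $\instrargs{1}, \ldots, \instrargs{k-1}$, then $\instrargs{k} = \instr_b$, then the hard reset $\instrargs{k+1} = \instr_{R,\memstateargs{0}}$, then a verbatim copy $\instrargs{k+2}, \ldots, \instrargs{2k}$ of the prefix, then the soft reset $\instrargs{2k+1} = \instr_r$, and finally $\instrargs{2k+2} = \instr_b$.

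Next I trace the induced path $\statepath = \transrelmulti(\memstateargs{0}, \instrseq)$. Because $\transrel$ is a (deterministic) function and the hard reset yields $\memstateargs{k+1} = \memstateargs{0}$, replaying the identical prefix gives $\memstateargs{k+1+j} = \memstateargs{j}$ for $0 \le j \le k-1$; in particular $\memstateargs{2k} = \memstateargs{k-1} = \memstate_b$. The soft reset then produces $\memstateargs{2k+1} = (\memstatearch, \meminitstatenarch)$, where $\memstatearch$ is the architectural part of $\memstate_b$. Thus $\memstateargs{2k+1} \in \meminitstateset$ and $\memstateargs{2k+1}(\memloc) = \memstate_b(\memloc)$ for every $\memloc \in \memlocs$. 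Since $\transsys$ is single-instruction correct and $\memstateargs{2k+1}$ is initial, $\instrargs{2k+2} = \instr_b$ executes correctly: $\specfunc(\memstateargs{2k+1}, \instr_b, \memstateargs{2k+2})$.

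It then remains to find a location on which the buggy output $\memstateargs{k} = \transrel(\memstate_b, \instr_b)$ and the correct output $\memstateargs{2k+2}$ disagree, which is precisely failure under the hard-reset success/failure criterion. Applying Definition~\ref{def:spec} to the correct transition, together with the architectural equality $\memstateargs{2k+1}(\memloc) = \memstate_b(\memloc)$, yields $\memstateargs{2k+2}(\memloc) = \memstate_b(\memloc)$ for all $\memloc \ne \funcmemlocsout(\instr_b)$ and $\memstateargs{2k+2}(\funcmemlocsout(\instr_b)) = \specfuncinstr{\opcode_b}{\memstate_b(\funcmemlocsin(\instr_b))}$; that is, $\memstateargs{2k+2}$ is the specification-correct successor of $\memstate_b$. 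Since $\neg\specfunc(\memstate_b, \instr_b, \memstateargs{k})$, I split on the two failure modes: in a type-A violation $\memstateargs{k}(\memloc^b_{out})$ differs from the specified value $\memstateargs{2k+2}(\memloc^b_{out})$, and in a type-B violation $\memstateargs{k}(\memloc_{\mathit{bad}}) \ne \memstate_b(\memloc_{\mathit{bad}}) = \memstateargs{2k+2}(\memloc_{\mathit{bad}})$. Either way some location separates $\memstateargs{k}$ from $\memstateargs{2k+2}$, so $\instrseq$ fails. The main obstacle is conceptual rather than computational: recognizing that re-executing the \emph{identical} sequence after a hard reset (rather than a duplicate) eliminates all location-partitioning constraints, and that the soft reset before the final instruction places the processor in an initial state whose architectural part matches $\memstate_b$, so that single-instruction correctness forces the last instruction to recompute the value that the bug corrupted. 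The type-A/type-B case split is then routine given Definition~\ref{def:spec}.
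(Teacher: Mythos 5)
Your proposal is correct and follows essentially the same route as the paper's proof: both construct the hard-reset QED test from a triggering sequence, use determinism of $\transrel$ plus the hard reset to replay the prefix (so $\memstateargs{2k} = \memstateargs{k-1}$), observe that the soft reset yields an initial state with the same architectural part, and invoke single-instruction correctness to force the final $\instr_b$ to compute the specification-correct result, contradicting the buggy output at some location. Your only departures are cosmetic — arguing the contrapositive rather than by contradiction, and spelling out the $k \ge 2$ justification and the type-A/type-B case split that the paper leaves implicit.
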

\begin{proof}
See appendix. \qedhere
\end{proof}


\section{Related Work} \label{sec:related:work}

Assertion-based formal verification techniques using theorem proving
or (bounded) model checking, e.g.,
\cite{DBLP:journals/jar/Hunt89,DBLP:conf/cav/BurchD94,DBLP:conf/tacas/BiereCCZ99,DBLP:conf/cav/BiereCRZ99},
require implementation-specific, manually-written properties. In contrast to
that, \emph{symbolic quick error detection
(SQED)}~\cite{DBLP:conf/itc/LinSBM15,8355908,DBLP:conf/date/SinghDSSGFSKBEM19,DBLP:conf/iccad/LonsingGMNSSYMB19}
is based on a universal self-consistency property.

In an early application of self-consistency checking for processor
verification without a specification~\cite{DBLP:conf/fmcad/JonesSD96},
given instruction sequences are transformed by, e.g., inserting 
NOPs. The original and the modified instruction sequence are expected
to produce the same result. As a formal foundation, this approach
relies on formulating and explicitly computing an equivalence relation over 
states, which is not needed with \nolinebreak SQED. 

SQED originates from \emph{quick error detection (QED)}, a
post-silicon validation
technique~\cite{DBLP:conf/itc/HongLPMLKHNGM10,DBLP:conf/date/LinHLFGHM13,DBLP:journals/tcad/LinHLSKFHGM14}.
QED is highly effective in reducing the length of existing bug traces
(i.e., instruction sequences) in post-silicon debugging of processor
cores. To this end, existing bug traces are
systematically transformed into \emph{QED tests} by techniques that
(among others) include instruction
duplication~\cite{DBLP:journals/tr/OhSM02}. SQED exhaustively searches
for minimal-length QED tests using BMC for pre-silicon
verification. It is also applicable to post-silicon validation. SQED
was extended to operate with symbolic initial
states~\cite{DBLP:conf/date/FadihehUNMBSK18,CS2QEDDATE2020} to
overcome the potential limitations of BMC when unrolling the
transition relation of a design starting in a concrete initial state.

SQED employs the principle of self-consistency based on a mathematical
interpretation of instructions as functions. That principle is also
applied by \emph{accelerator quick error detection
  (A-QED)}~\cite{AQEDDAC2020accepted}, a formal pre-silicon
verification technique for HW accelerator designs.  A-QED checks the
functions implemented by an accelerator for functional consistency
and, like SQED, does not require a formal specification.

\emph{Unique program execution
  checking}~\cite{DBLP:conf/date/FadihehSBMK19} relies on a particular
variant of self-consistency to check security vulnerabilities of
processor designs for covert channel attacks.  In the context
of security, self-consistency is also applied to verify
secure information flow by self-composition of
programs~\cite{DBLP:conf/csfw/BartheDR04,DBLP:conf/fm/BartheCK11,DBLP:conf/uss/AlmeidaBBDE16,DBLP:conf/cav/YangVSGM18}.

Several approaches, including both formal and simulation-based
approaches, exist for checking \emph{single-instruction (SI)
  correctness}
cf.~\cite{DBLP:conf/cav/ReidCDGHKPSVZ16,DBLP:conf/date/SinghDSSGFSKBEM19,CS2QEDDATE2020}. Checking
SI correctness is complementary to checking self-consistency using
SQED and is also much more tractable.  In a formal approach, a
property corresponding to $\specfuncinstrname{\opcode}$ (based on the
ISA) is written for each opcode $\opcode \in \opcodeset$, and the
model checker is used to ensure that the property holds when starting
from any initial state.  Because the approach is restricted to initial
states and only a single instruction execution, it is much simpler to
specify and check than would be a property specifying the full correctness
of $\transsys$.  Efficient specialized approaches exist for checking
multiplier
units~\cite{DBLP:conf/aspdac/KrautzWKWJP08,DBLP:conf/date/Sayed-AhmedGKSD16,DBLP:conf/fmcad/RitircBK17,DBLP:conf/fmcad/KaufmannBK19},
which is computationally hard.


\section{Conclusion and Future Work} \label{sec:conclusion}

We laid a formal foundation for symbolic quick error detection (SQED) and presented a theoretical
framework to reason about its bug-finding capabilities. In our
framework, we proved soundness as well as (conditional) completeness, thereby
closing a gap in the theoretical understanding of SQED. Soundness
implies that SQED does not produce spurious counterexamples,
i.e., any counterexample to QED-consistency reported by SQED
corresponds to an actual bug in the design. For completeness, we
characterized a large class of bugs that can be detected by failing
QED tests under modest assumptions about these bugs.  We also identified
several QED test extensions based on executing no-operation
and reset instructions. For these extensions, we proved
even stronger completeness guarantees, ultimately leading to a variant
of SQED that, together with single-instruction correctness, is \nolinebreak complete.

As future work, it would be valuable to extend our framework to consider
variants of SQED that operate with more fully symbolic initial
states~\cite{DBLP:conf/date/FadihehUNMBSK18,CS2QEDDATE2020}. The challenge will
be to identify how this can be done while guaranteeing no spurious
counterexamples.  For practical applications,
our theoretical results provide valuable insights. For example, in
present implementations of
SQED~\cite{DBLP:conf/date/SinghDSSGFSKBEM19,DBLP:conf/iccad/LonsingGMNSSYMB19},
the flexibility to partition register/memory locations into sets of
original and duplicate locations and to select the bijective mapping between
these two sets has not yet been explored. Similarly, it is promising to
combine standard QED tests and the specialized extensions we presented
in a uniform practical tool framework. Features like
soft/hard reset instructions could either be implemented in HW in a
design-for-verification approach or in software inside a model
checker. In another research direction, we plan to extend our
framework to model the detection of deadlocks using SQED,
cf.~\cite{DBLP:conf/itc/LinSBM15}, and prove related theoretical
guarantees.

\textbf{Acknowledgments.} We thank Karthik Ganesan and John Tigar
Humphries for helpful initial discussions and the anonymous reviewers for their feedback.


\IEEEtriggeratref{19}



\newpage

\begin{appendices}

\section{Proofs}

\begin{proof}[Proof of Lemma~\ref{lem:func:congr}]
Assume that the antecedent of the implication holds, and
let $\memlocorig \in \funcmemlocsorig$ be an arbitrary
original memory location.  If $\memlocorig = \funcmemlocsout(\instrorig)$,
then $\memstate'(\funcmemlocsout(\instrorig)) =
\memstate''(\funcmemlocscorr(\funcmemlocsout(\instrorig)))$ by
Corollary~\ref{def:dup:specvalue}.

Suppose, on the other hand, that $\memlocorig \not=
\funcmemlocsout(\instrorig)$.  Let $\memlocdup = \funcmemlocscorr(\memlocorig)$ be the corresponding duplicate
location. By the injectivity of
$\funcmemlocscorr$, we have $\memlocdup \not =
\funcmemlocscorr(\funcmemlocsout(\instrorig))$, and thus
$\memlocdup \not= \funcmemlocsout(\instrdup)$.
We can thus conclude from~\eqref{assume:trans:spec} that $\memstate_0(\memlocorig)
= \memstate'(\memlocorig)$ and $\memstate_1(\memlocdup) =
\memstate''(\memlocdup)$.

Finally, since $\memstate_0(\memlocorig) =
\memstate_1(\memlocdup)$ by assumption, we derive $\memstate'(\memlocorig) =
\memstate''(\memlocdup)$, that is, $\memstate'(\memlocorig) =
\memstate''(\funcmemlocscorr(\memlocorig))$.
\qedhere
\end{proof}

\begin{proof}[Proof of Case B of Lemma~\ref{lem:bug:fail:qedtest:canonical:new:if}]
  Since $\qedcons(\memstateargs{0})$, we have
    \begin{equation}
    \memstateargs{0}(\memlocinv) = \memstateargs{0}(\memlocinvy) \label{case:B:trans:init}
      \end{equation}         
  Since $\specfunc(s_0,\instr_1,s_1)$ by Definition~\ref{def:bug:specific:qed:test}, we have
    $\memstateargs{0}(\memlocinv) = \memstateargs{1}(\memlocinv)$ and
    $\memstateargs{0}(\memlocinvy) = \memstateargs{1}(\memlocinvy)$, and so it
  follows that,
    \begin{equation}
    \memstateargs{1}(\memlocinv) =
    \memstateargs{1}(\memlocinvy) \label{case:B:trans0}
    \end{equation} 
  Due to the requirements in Case B of Definition~\ref{def:bug:specific:qed:test}, we have
    \begin{equation}
      \memstateargs{n+1}(\memlocinvy) = \memstateargs{2n}(\memlocinvy) \label{case:B:cond:0}
    \end{equation} 
    \begin{equation}
    \memstateargs{1}(\memlocinv) = \memstateargs{2n}(\memlocinv) \label{case:B:cond:1}
    \end{equation}
    \begin{equation}
     \memstateargs{1}(\memlocinvy)
     = \memstateargs{n}(\memlocinvy) \label{case:B:cond:2}
    \end{equation}
  Now, because we are in Case B, we know that $\memstateargs{n}(\memlocinvy) \not =
    \memstateargs{n+1}(\memlocinvy)$, so by \eqref{case:B:cond:0},
    \begin{equation}
    \memstateargs{2n}(\memlocinvy) \not = \memstateargs{n}(\memlocinvy) \label{case:B:trans1}
    \end{equation} 
  But \eqref{case:B:trans0}, \eqref{case:B:cond:1} and \eqref{case:B:cond:2} give us:
    \begin{equation}
    \memstateargs{n}(\memlocinvy) = \memstateargs{2n}(\memlocinv) \label{case:B:trans2}
    \end{equation} 
  Thus, by \eqref{case:B:trans1} and \eqref{case:B:trans2},
    \begin{equation}
    \memstateargs{2n}(\memlocinvy) \not = \memstateargs{2n}(\memlocinv)
    \end{equation} 
  and hence $\neg \qedcons(\memstateargs{2n})$. \qedhere
\end{proof}

\begin{proof}[Proof of Lemma~\ref{lem:bug:specific:soft-reset-qed:test}]
  Because $k$ is minimal, we know that $\instrargs{1}\ldots\instrargs{k-1}$ all
  execute according to their specification.  For $\instrargs{k+j}$, if $j$ is
  odd, then it is a no-operation and therefore it changes no location values, and if $j$ is
  even, then it executes according to specification because it is executing in
  an initial state.  Let $\memloc$ be some original location whose value is
  incorrect after the buggy instruction $\instrargs{k} = \instrargs{b}$ executes (we know the buggy instruction corrupts an original
  location because it does not corrupt a duplicate location), with
  $\memlocdup = \funcmemlocscorr(\memloc)$.  We consider a type-B bug
  as in Definition~\ref{def:bug:specific:qed:test} first and assume
  that $\memloc$ is not the output location of $\instrargs{k}$. Since
  $\memstateargs{0}$ is
  QED-consistent, $\memstateargs{0}(\memloc) = \memstateargs{0}(\memlocdup)$.
  Since instructions $1$ through $k$ do not change duplicate locations, we have
  $\memstateargs{0}(\memloc) = \memstateargs{k}(\memlocdup)$.
  By repeated application of Lemma~\ref{def:dup:specvalue} and by definition of
  no-operation instructions, we can then
  conclude that $\memstateargs{k-1}(\memloc) =
  \memstateargs{3k-1}(\memlocdup)$.  Then, because of the bug, it
  follows that $\memstateargs{k}(\memloc) \not= \memstateargs{3k}(\memlocdup)$.  Finally, because
  none of the instructions after $\instrargs{k}$ modify original locations,
  $\memstateargs{k}(\memloc) = \memstateargs{3k}(\memloc)$, so $\neg
  \qedcons(\memstateargs{3k})$.

  The case of a type-A bug where $\memloc$ is the output location of
  $\instrargs{k}$ can be proved analogously. \qedhere
\end{proof}

\begin{proof}[Proof of Theorem~\ref{thm:hard:reset}]
  Suppose $\transsys$ is not correct.  Let $\instrseq^b = \langle
  \instrargs{1},\ldots,\instrargs{k} \rangle$ be a sequence required to reach a
  state $\memstateargs{b}\in \triggerstatesnoarg$ starting from some initial
  state $\meminitstate$, for some bug $\bug = \langle \instr_b,
  \triggerstatesnoarg \rangle$, such that the buggy instruction
  $\instr_b = \instrargs{k}$ is triggered in state $\memstateargs{b}$.
  We know
  $k \ge 2$ since $\transsys$ is single-instruction correct.
  Let $\instrseq$ be the unique bug-specific hard-reset QED test with initial
  state $\meminitstate$ whose prefix is $\instrseq^b$.
  Let $\statepath = \transrelmulti(\meminitstate, \instrseq)$.  It is easy to
  see that $\memstateargs{k-1} = \memstateargs{2k}$, because both are the result
  of executing the same instructions from the same initial state.  Furthermore,
  because $\instrargs{2k+1}$ is a no-operation instruction, all locations in
  $\memstateargs{2k+1}$ have the same values as those in $\memstateargs{2k}$.
  Let $\memloc$ be some location whose value is incorrect in $\memstateargs{k}$
  after $\instrargs{k}$ executes.  The same location must have the correct
  value in $\memstateargs{2k+2}$ because $\instrargs{2k+2}$ executes from the
  same architectural state and must execute correctly because it is an initial
  state and $\transsys$ is single-instruction correct.  Thus
  $\memstateargs{k}(\memloc) \not= \memstateargs{2k+2}(\memloc)$, and the
  hard-reset QED test fails, which is a contradiction.  Therefore, $\transsys$
  must be correct. Note that the above reasoning applies to both
  type-A and type-B bugs as in
  Definition~\ref{def:bug:specific:qed:test}. \qedhere
\end{proof}

\end{appendices}

\end{document}